\newtheorem{definition}{Definition}
\newtheorem{thm}{Theorem}
\newtheorem{proposition}[thm]{Proposition}
\newtheorem{lem}[thm]{Lemma}
\newtheorem{cor}[thm]{Corollary}
\newtheorem{remark}[thm]{Remark}
\newtheorem{notation}[thm]{Notation}
\def\ZZ{{\mathbb{Z}}}
\def\RR{{\mathbb{R}}}
\def\CC{{\mathbb{C}}}
\newcommand{\apici}[1]{\lq\lq{#1}\rq\rq}
\begin{document}

\begin{frontmatter}

\cortext[cor1]{Corresponding author. Fax: +39-090-393502.}

\title{A formula for the number of $(n-2)$-gap in digital $n$-objects}
\tnotetext[gnsaga]{This research was supported by P.R.I.N, P.R.A.
and  I.N.D.A.M. (G.N.S.A.G.A.).}

\author[unime]{Angelo MAIMONE%\corref{cor1}
} \ead{angelo.maimone@unime.org}

\author[unime]{Giorgio NORDO\corref{cor1}
} \ead{giorgio.nordo@unime.org}
\address[unime]{Dipartimento di Matematica, Universit\`{a} degli Studi di Messina, Contrada Papardo, Salita Sperone 31, 98166, Sant'Agata, Messina.}

\begin{abstract}
We provide a formula that expresses the number of $(n-2)$-gaps of a
generic digital $n$-object. Such a formula has the advantage to involve only a few simple intrinsic parameters of the object and it is obtained by using a combinatorial  technic based on incidence structure and on the notion of free cells.
This approach seems suitable as a model for an automatic computation, and also allow us to find some expressions for the maximum number of $i$-cells that bound or are bounded by a fixed $j$-cell.
\end{abstract}

\begin{keyword}
gap \sep free cell \sep tandem \sep bounding relation \sep digital object \sep incidence structure
\MSC[2010] 52C99 \sep 52C45

\end{keyword}

\end{frontmatter}

\section{Introduction}
With the word  {\lq\lq{gap}\rq\rq}  in Digital Geometry we mean some basic portion of a digital object that
a discrete ray can cross without intersecting any voxel of the object itself.
Since such a notion is strictly connected with some applications in the field of Computer graphics
(e.g. the rendering of a 3D image by the ray-tracing technique), many papers (see for example \cite{BMN arxiv}, \cite{BMMK Nevada},  \cite{BMN gap Berlino}, \cite{brimkov_maimone_nordo}, and \cite{Genus}) concerned the study of $0$- and $1$-gaps of $3$-dimensional objects and of some of their topological invariant such as dimension and genus (i.e. the degree of connectedness of a digital image).
Recently (see \cite{gap3D}), we have found a formula for expressing the number of $1$-gaps of a digital $3$-object by means of
the number of its free cells of dimension $1$ and $2$.
During the submission process of that paper, the anonymous referee raised to our attention the existence of another recent and more general  formula presented in \cite{B} which gives the number of a generical $(n-2)$-gaps of any digital $n$-object.
Unfortunately, such formula involves some parameters (the number of $(n-2)$-blocks and of $n$-,  $(n-1)$- and $(n-2)$- cells) that are non-intrinsic or that can not be easily obtained by the geometrical knowledge of the object.
For such a reason, in the present paper, we propose a generalization of the formula obtained in \cite{gap3D} that  allow us to express the number of $(n-2)$-gaps
using only two basic parameters, that is the number of free $(n-2)$- and $(n-1)$-cells of the object itself.
Although we prove the equivalence between these two formulas, the latter approach seems simpler and more suitable as a model for an automatic computation.
\\
In order to obtain our formula, we adopt a combinatorial  technic based on the notion of incidence structure, which also allow us to find a couple of interesting expressions for the maximum number of $i$-cells that bound or are bounded by a fixed $j$-cell.
\par
In the next section we recall and formalize some basic notions and notations
of digital geometry.
In Section \ref{sez:back}, we introduce the notions of tandem and gap, and we give some elementary facts about them.
In Section \ref{sez:main res}, we prove some propositions concerning, in particular, the number of $(n-1)$-cells  of the boundary of a digital object that are bounded by a given $(n-2)$-cell satisfying some particular condition, and we use such results to obtain our main formula for the number of $(n-2)$-gaps. Finally, in Section \ref{sez:concl}, we resume the goal of the paper and we give some suggestions for other future researches.

\section{Preliminaries}\label{sez:preliminari}
Throughout this paper we use the \textit{grid cell model} for
representing digital objects, and we adopt the terminology from
\cite{klette-rosenfeld} and \cite{Kovalevsky}.
\par
Let $x=(x_1,\ldots x_n)$ be a point of $\ZZ^n$, $\theta \in
\{-1,0,1\}^n$ be an $n$-word over the alphabet $\{-1,0,1\}$, and
$i\in\{1,\ldots n\}$. We define $i$-cell related to $x$ and
$\theta$, and we denote it by $ e = (x,\theta)$, the Cartesian
product, in a certain fixed order, of $n-i$ singletons $\left\{ x_j
\pm \frac{1}{2} \right\}$ by $i$ closed sets $\left[x_j -
\frac{1}{2}, x_j + \frac{1}{2}\right]$, i.e. we set
\[
e = (x,\theta) = \prod_{j=1}^n \left[ x_j  + \frac 1 2 \theta_j - \frac 1 2 [\theta_j=0], x_j  + \frac 1 2 \theta_j + \frac 1 2 [\theta_j=0]  \right],
\]
where $[\bullet]$ denotes the Iverson bracket \cite{Knuth}. The
word $\theta$ is called the \textit{direction} of the cell
$(x,\theta)$ related to the point $x$.
\\
Let us note that an $i$-cell can be related to different point $x
\in \ZZ^n$, and, once we have fixed it, can be related to different
direction. So, when we talk generically about $i$-cell, we mean one
of its possible representation.

The dimension of a cell $e = (x, \theta)$, denoted by  $\dim(e)=i$,  is the number of non-trivial interval of its product representation, i.e. the number of null components of its direction $\theta$.
Thus, $\dim(e) = \sum_{j = 1}^ n [\theta_j = 0]$ or, equivalently, $\dim(e)= n - \theta \cdot \theta$.
So, $e$ is an $i$-cell if and only if it  has dimension $i$.

We denote by $\CC_n^{(i)}$ the set of all $i$-cells of $\RR^n$ and
by $\CC_n$ the set of all cells defined in $\RR^n$, i.e. we set
$\CC_n=\bigcup_{j=0}^n \CC_n^{(j)}$. An $n$-cell of $\CC_n$ is also
called an $n$-voxel. So, for convenience, an $n$-voxel is denoted by
$v$, while we use other lower case letter (usually $e$) to denote
cells of lower dimension. A finite collection $D$ of $n$-voxels is a
digital $n$-object. For any $i=0,\ldots, n$, we denote by $C_{i}(D)$
the set of all $i$-cells of the object $D$, that is $D \cap
\CC_n^{(i)}$, and by $c_i(D)$ (or simply by $c_i$ if no confusion
arise) its cardinality $|C_i(D)|$.

\begin{definition}\label{Def: Definizione di centro di una cella}
Let $e = (x,\theta)$ be an $i$-cell. The center of $e$ is
defined by $cnt(e) = x + \frac 1 2 \theta$.
\end{definition}

\begin{remark}\label{Not:Su_espr_cell}
Let us note that for a cell $e = (x,\theta)$, we have  $cnt(e) = x$
if and only if $\dim(e)=n$. Moreover, thanks to Definition  \ref{Def:
Definizione di centro di una cella}, an $i$-cell related to $x$ and
$\theta$ can be shortly represented in the following way:
\[ e= \prod_{j=1}^n \left[ cnt(e)_j - \frac 1 2 [\theta_j=0], cnt(e)_j + \frac 1 2 [\theta_j=0]  \right]. \]
\end{remark}

\begin{definition}\label{Def:Cella duale}
Let $e=(x,\theta)$ be an $i$-cell related to the point $x$ and to the direction
$\theta$. We define dual $e'$  of $e$,  the cell represented by the  following cartesian product:
\[ e'=\prod_{j=1}^n \left[ cnt(e)_j - \frac 1 2 [\theta_j \ne 0], cnt(e)_j + \frac 1 2 [\theta_j \ne 0]  \right]. \]
\end{definition}

By the above expression and the definition of dimension of a cell, we have that the dimension of the dual  $e'$ of a cell $e= (x, \theta)$ coincides with the number of non-null components of the direction $\theta$, that is  $\dim(e') = \sum_{j=1}^n [\theta_j \ne 0]$.
Consequently, the dual $e'$ of an $i$-cell $e$ is an $(n-i)$-cell.

\begin{definition}\label{Def:Duale di un insieme}
Let $D$ be a digital object. The dual $D'$ of $D$ is the set of all dual cells $e'$, with $e \in D$.
\end{definition}

We say that two $n$-cells $v_1$, $v_2$ are $i$-adjacent ($ i =
0,1,\ldots, n-1$) if $ v_1\,\ne\, v_2 $ and there exists at least an
$i$-cell $\overline{e}$  such that $ \overline{e} \subseteq v_1 \cap
v_2$, that is if they are distinct and share at least an $i$-cell.
Two $n$-cells $v_1$, $v_2$ are \textit{strictly} $ i $-adjacent, if
they are $i$-adjacent but not $j$-adjacent, for any $j>i$, that is
if $v_1\cap v_2 \in \CC_n^{(i)}$. The set of all $n$-cells that are
$i$-adjacent to a given $n$-voxel $v$ is denoted by $A_i(v)$ and
called the $i$-\textit{adjacent neighborhoods} of $v$. Two cells
$v_1,v_2 \in \CC_n$ are \textit{incident} each other,  and we write $e_1 I e_2$, if $e_1
\subseteq e_2$ or $e_2 \subseteq e_1$.

\begin{definition}\label{Def: relazione di limitatezza}
Let $e_1,e_2 \in \CC_n$. We say that $e_1$ bounds $e_2$ (or that
$e_2$ is bounded by $e_1$), and we write $e_1 < e_2$, if $e_1 I e_2$
and $\dim(e_1) < \dim(e_2)$. The relation $<$ is called bounding
relation.
\end{definition}

\begin{definition}\label{Def:cella semplice}
Let $e$ be an $i$-cell of a digital $n$-object $D$ (with $i=0,\ldots
n-1$). We say that $e$ is simple if $e$ bounds one and only one
$n$-cell.
\end{definition}

\begin{definition}\label{Def:dualita_tra_insiemi}
Let $D$ and $G$ be two finite subsets of $\CC_n$. We say that $D$
and $G$ form a \textit{dual pair}  iff there exists a bijection
$\varphi \colon D \to G$ that inverts the bounded relation, that
is for any couple $e,f \in D$, if $e < f$ then $ \varphi(f) <
\varphi(e)$, and for any $e \in D$, $\dim(\varphi(e)) = n - \dim(e)$.
\end{definition}

\begin{proposition}
Let $D$ be a digital $n$-object and $D'$ its dual. Then $D$ and $D'$ form a dual pair.
\end{proposition}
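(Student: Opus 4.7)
The plan is to exhibit the duality map $\varphi\colon D\to D'$ defined by $\varphi(e)=e'$, and verify in turn the three conditions of Definition \ref{Def:dualita_tra_insiemi}: that $\varphi$ is a bijection, that it reverses dimensions, and that it inverts the bounding relation. Surjectivity is immediate from the definition $D'=\{e':e\in D\}$. Injectivity follows from the fact that the duality operation is an involution, namely $(e')'=e$: indeed, using Remark \ref{Not:Su_espr_cell} and Definition \ref{Def:Cella duale} one checks that the dual swaps, coordinate by coordinate, the singleton $\{cnt(e)_j\}$ with the unit interval $[cnt(e)_j-\tfrac12,cnt(e)_j+\tfrac12]$ and viceversa, while keeping $cnt(e')=cnt(e)$; applying the swap twice restores $e$. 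The dimension identity $\dim(\varphi(e))=n-\dim(e)$ is exactly the observation recorded right after Definition \ref{Def:Cella duale}.

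The core of the argument is to show that $e<f$ in $D$ implies $f'<e'$ in $D'$. Given $e<f$, incidence together with $\dim(e)<\dim(f)$ forces $e\subseteq f$, which is a coordinate-wise containment of intervals. I would distinguish two cases for each coordinate $j$ according to the direction $\theta$ of $e$. If $\theta_j=0$, then $e$ has a unit $j$-th interval; for such an interval to be contained in the $j$-th interval of $f$, the latter must be identical, forcing $\theta^f_j=0$ and $cnt(f)_j=cnt(e)_j$; dualising yields equal singletons at the $j$-th coordinate of $e'$ and $f'$. If instead $\theta_j\neq 0$, then $e$'s $j$-th interval is the singleton $\{cnt(e)_j\}$, and it sits inside $f$'s $j$-th interval in one of two ways: either $\theta^f_j=0$ with $cnt(e)_j\in[cnt(f)_j-\tfrac12,cnt(f)_j+\tfrac12]$, or $\theta^f_j\neq 0$ with $cnt(f)_j=cnt(e)_j$; in either subcase a direct check shows that the $j$-th interval of $f'$ is contained in that of $e'$.

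Taking the product over $j$ gives $f'\subseteq e'$, and since $\dim(f')=n-\dim(f)<n-\dim(e)=\dim(e')$, we conclude $f'<e'$ as required. The main obstacle, and essentially the only nontrivial point, is the coordinate-wise bookkeeping in the middle step; it reduces to the elementary symmetry that, for half-integer or integer values $a,b$, one has $\{a\}\subseteq[b-\tfrac12,b+\tfrac12]$ if and only if $\{b\}\subseteq[a-\tfrac12,a+\tfrac12]$, which is precisely the mechanism by which duality reverses inclusions on each factor and therefore on the product.
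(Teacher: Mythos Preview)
Your proof is correct and follows essentially the same route as the paper's: define $\varphi(e)=e'$, verify bijectivity and the dimension identity, and then argue coordinate-wise that $e\subseteq f$ forces $f'\subseteq e'$. The only cosmetic difference is that the paper carries out the coordinate-wise step via a single chain of inequalities using the identity $[\theta_j\neq 0]=1-[\theta_j=0]$, whereas you split into the cases $\theta_j=0$ versus $\theta_j\neq 0$ and invoke the symmetry $\{a\}\subseteq[b-\tfrac12,b+\tfrac12]\iff\{b\}\subseteq[a-\tfrac12,a+\tfrac12]$; the content is the same.
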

\begin{proof}
Let us consider the mapping $\varphi \colon D \to D'$ that associates to each cell $e = (x,\theta) \in D$ its dual  $\varphi(e) = e'$.
Since, by Remark \ref{Not:Su_espr_cell} and Definition \ref{Def:Cella duale}, both $e$ and $e'$ are uniquely determinated by the point $x$ and the direction $\theta$, it is clear that $\varphi$ is a bijection.
\\
By a basic property of the Iverson notation, for every cell $e= (x, \theta)$, we have that
\[ \dim(\varphi(e)) = \dim(e') = \sum_{j=1}^n [\theta_j \ne 0] = \sum_{j=1}^n \left( 1 -  [\theta_j = 0] \right)
= n - \sum_{j=1}^n [\theta_j = 0] = n - \dim(e) . \]
Moreover, $\varphi$ inverts the bounding relation $<$ over $\CC_n$. Indeed, for every couple of cells $e = (x, \theta)$ and $f = (y, \psi)$ in $D$ such that $e < f$, without loss of generality,  we have that $ e \subseteq f$ and $\dim(e) < \dim(f)$. Thus, by Remark \ref{Not:Su_espr_cell}, we get
\[ \prod_{j=1}^n \left[ cnt(e)_j - \frac 1 2 [\theta_j=0], cnt(e)_j + \frac 1 2 [\theta_j=0]  \right] \subseteq \prod_{j=1}^n \left[ cnt(f)_j - \frac 1 2 [\psi_j=0], cnt(f)_j + \frac 1 2 [\psi_j=0]  \right] . \]
Hence, for every $j = 1, \ldots, n$, we have
\[ cnt(f)_j - \frac 1 2 [\psi_j=0] \le  cnt(e)_j - \frac 1 2 [\theta_j=0] \le  cnt(e)_j + \frac 1 2 [\theta_j=0] \le cnt(f)_j + \frac 1 2 [\psi_j=0]. \]
and so, we obtain
\begin{align*}
cnt(e)_j - \frac 1 2 [\theta_j \ne 0]  &
=  cnt(e)_j - \frac 1 2 \left( 1 - [\theta_j=0]\right)
=   cnt(e)_j  + \frac 1 2 [\theta_j=0]  - \frac 1 2  \le cnt(f)_j + \frac 1 2 [\psi_j=0]  - \frac 1 2 \\
 & = cnt(f)_j - \frac 1 2 [\psi_j \ne 0]
\le cnt(f)_j + \frac 1 2 [\psi_j \ne 0]
= cnt(f)_j + \frac 1 2 \left(  1-[\psi_j=0]  \right)  \\
& = cnt(f)_j  - \frac 1 2[\psi_j=0]   + \frac 1 2
\le  cnt(e)_j - \frac 1 2 [\theta_j=0]  + \frac 1 2
 = cnt(e)_j + \frac 1 2 [\theta_j \ne 0],
\end{align*}
which implies
\[ \prod_{j=1}^n \left[ cnt(f)_j - \frac 1 2 [\psi_j \ne 0], cnt(f)_j + \frac 1 2 [\psi_j \ne 0]  \right] \subseteq \prod_{j=1}^n \left[ cnt(e)_j - \frac 1 2 [\theta_j \ne 0], cnt(e)_j + \frac 1 2 [\theta_j \ne 0]  \right] . \]
Thus, $f' \subseteq e'$, i.e.  $\varphi(f) \subseteq \varphi(e)$.
Finally, since $\dim(e) < \dim(f)$, we have
$\dim(\varphi(f)) = n - \dim(f) < n - \dim(e) = \dim (\varphi(e)) $
and so $\varphi(f) < \varphi (e)$.
\end{proof}

\begin{definition}
An incidence structure (see \cite{Design_Theory})  is a triple $(V,\mathcal{B}, \mathcal I)$
where $V$ and $\mathcal{B}$ are any two disjoint sets and $\mathcal
I$ is a binary relation between $V$ and $\mathcal B$, that is
$\mathcal I \subseteq V \times \mathcal B$. The elements of $V$ are
called points, those of $\mathcal{B}$ blocks. Instead of $(p, B) \in
\mathcal I$, we simply write $p \mathcal I B$ and say that
\apici{the point $p$ lies on the block $B$} or \apici{$p$ and $B$
are incident}.
\end{definition}

If $p$ is any point of $V$, we denote by $(p)$ the set of all blocks
incident to $p$, i.e. $(p)=\{B\in \mathcal{B} \colon p \mathcal I
B\}$. Similarly, if $B$ is any block of $\mathcal B$,  we denote by
$(B)$ the set of all points incident to $B$, i.e. $(B)=\{p \in V
\colon p \mathcal I B \}$. For a point $p$, the number $r_p = |(p)|$
is called the degree of $p$, and similarly, for a block $B$, $k_B =
|(B)|$ is the degree of $B$.

We remind the following fundamental proposition of incidence
structures.
\begin{proposition}\label{Pro:relazione fondamentale delle strutture di incidenza}
Let $(V,\mathcal{B}, \mathcal I)$ be an incidence structure. We have
\begin{equation}\label{Eq: Equazione fondamentale delle strutture di incidenza}
    \sum_{p \in V} r_p = \sum_{B \in \mathcal{B}} k_B,
\end{equation}
where $r_p$ and $k_B$ are the degrees of any point $p \in V$ and any
block $B \in \mathcal B$, respectively.
\end{proposition}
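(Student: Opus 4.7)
The plan is the standard double counting argument: evaluate the cardinality of the incidence set $\mathcal{I} \subseteq V \times \mathcal{B}$ in two different ways. Concretely, I would first rewrite each degree using the Iverson bracket already adopted in the paper, namely
\[
r_p = |(p)| = \sum_{B \in \mathcal{B}} [p \mathcal{I} B] \qquad \text{and} \qquad k_B = |(B)| = \sum_{p \in V} [p \mathcal{I} B].
\]
These follow immediately from the definitions of $(p)$ and $(B)$, since an element of $\mathcal{B}$ (respectively $V$) contributes $1$ to the cardinality exactly when it is incident to $p$ (respectively $B$).

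The core of the proof is then simply an interchange of the two finite sums. Starting from the left-hand side of \eqref{Eq: Equazione fondamentale delle strutture di incidenza},
\[
\sum_{p \in V} r_p = \sum_{p \in V} \sum_{B \in \mathcal{B}} [p \mathcal{I} B] = \sum_{B \in \mathcal{B}} \sum_{p \in V} [p \mathcal{I} B] = \sum_{B \in \mathcal{B}} k_B,
\]
where the middle equality is legitimate because both sums are finite (the incidence structure lives on finite sets $V$ and $\mathcal{B}$, as implicitly required by the cardinalities $r_p$ and $k_B$ being well defined integers). Equivalently, both expressions count the total number of incident pairs $|\mathcal{I}|$, by partitioning $\mathcal{I}$ either according to its first coordinate $p \in V$ or according to its second coordinate $B \in \mathcal{B}$.

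There is essentially no obstacle here: the statement is the combinatorial analogue of Fubini's theorem for a $\{0,1\}$-valued function on $V \times \mathcal{B}$, and the only thing worth emphasizing in the write-up is the interpretation of the common value as $|\mathcal{I}|$, since this is exactly the viewpoint that will later be used when $V$ and $\mathcal{B}$ are chosen to be sets of $i$-cells and $j$-cells of a digital $n$-object with $\mathcal{I}$ given by the bounding relation from Definition~\ref{Def: relazione di limitatezza}.
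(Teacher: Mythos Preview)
Your double counting argument is correct and is precisely the standard proof of this identity. Note, however, that the paper does not actually prove this proposition: it is introduced with the words ``We remind the following fundamental proposition of incidence structures'' and simply stated, with the reference \cite{Design_Theory} given for background. So there is no ``paper's own proof'' to compare against; your write-up supplies exactly the routine justification the authors chose to omit.
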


\section{Theoretical Backgrounds}\label{sez:back}
In \cite{BMN gap Berlino} and \cite{gap3D}, a constructive
definition of gap for a digital object $D$ in spaces of dimensions
$2$ and $3$ was proposed, and a relation between the number of such
a gaps and the numbers of free cells was found.
\par
In order to generalize those results for the $n$-dimensional space,
we need to introduce some definitions and to make some considerations.

\begin{definition}\label{Def:blocchi and L-block}
Let $e$ be an $i$-cell (with $0\le i \le n-1$) of $\CC_n$.
Then:
\begin{enumerate}[(1)]
\item An
$i$-block centered on $e$ is the union of all the $n$-voxels
bounded by $e$, i.e. $B_i(e)= \bigcup\{ v \in \CC_{n}^{(n)} \colon e
< v \}$.
  \item An $L$-block centered on $e$ is an $(n-2)$-block centered on $e$ from
  which we take away one of its four  $n$-cells,
  that is $L(e) = B_{n-2} ( e ) \setminus \{ v \}$,
  where $v \in C_n(B_{n-2} ( e ) )$.
\end{enumerate}
\end{definition}

\begin{remark}\label{Not:Numero di nVoxel che compone un blocco e un L blocco}
Let us note that, for any $i$-cell $e$, $B_i(e)$ is the union of
exactly $2^{n-i}$ $n$-voxels,  $e \in B_i(e)$, and that an $L$-block
is exactly composed of three $n$-voxels.
\end{remark}

\begin{definition}\label{Def:tandemn}
Let $v_1$, $v_2$ be two $n$-voxels of a digital object $D$, and $e$
be an $i$-cell ($i=0,\ldots, n-1$). We say that $t_i = \{v_1,v_2\}$
forms an $i$-tandem of $D$ over $e$ if $D \cap B_i(e)= \{v_1,v_2\}
$, $v_1$ and $v_2$ are strictly $i$-adjacent and $v_1 \cap v_2 = e$.
\end{definition}

\begin{definition}\label{Def:gapn}
Let $D$ be a digital $n$-object and $e$ be an $i$-cell (with $i =
0,\ldots,n-2 $). We say that $D$ has an $i$-gap over $e$ if there
exists an $ i $-block $B_i(e)$ such that $B_i(e) \setminus D$ is an
$i$-tandem over $e$. The cell $e$ is called $i$-hub of the related
$i$-gap. Moreover, we denote by $g_i(D)$ (or simply by $g_i$ if no
confusion arises) the number of $i$-gap of $D$.
\end{definition}
 Examples of gaps for
$3$D case are given in Figure \ref{Fig:gap}.

%---------------------------------------------------------------------------------------------
\begin{figure}
  \centering
  \includegraphics{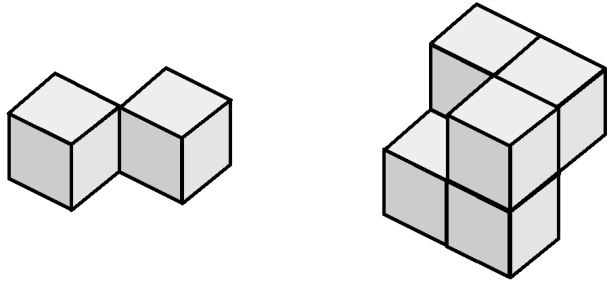}\\
  \caption{Configurations of  $1$- and $0$-gaps in $\CC_3$.  }\label{Fig:gap}
\end{figure}
%---------------------------------------------------------------------------------------------

\begin{proposition}\label{Pro:equivalence nmenodue tunnel}
A digital $n$-object $D$ has an  $(n-2)$-gap over an $(n-2)$-hub $e$
iff there exist two $n$-voxels $v_1$ and $v_2$ such that:
\begin{enumerate}[1)]
\item $e < v_1$ and $e < v_2$;
\item $v_1 \in A_{n-2}(v_2) \setminus A_{n-1}(v_2)$;
\item $A_{n-1}(v_1) \cap A_{n-1}(v_2) \cap D =\emptyset$.
\end{enumerate}
\end{proposition}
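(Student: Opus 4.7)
My plan is to translate the definition of an $(n-2)$-gap into the combinatorial structure of the block $B_{n-2}(e)$ and then check the three conditions one by one. By Remark~\ref{Not:Numero di nVoxel che compone un blocco e un L blocco}, $B_{n-2}(e)$ consists of exactly $2^{n-(n-2)} = 4$ $n$-voxels, all bounded by $e$. Fixing coordinates so that the two directions orthogonal to $e$ are the first two, the four centers of these voxels sit at the four corners of a unit square in those coordinates, while the remaining $n-2$ coordinates are fixed by the center of $e$. Any two of the four voxels are then either $(n-1)$-adjacent (sharing a facet through $e$) or strictly $(n-2)$-adjacent (sharing only $e$), with the latter case corresponding to the two diagonals of the square.

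The pivotal combinatorial claim, which I expect to be the main technical point, is the following: whenever $v_1, v_2$ are strictly $(n-2)$-adjacent voxels of $B_{n-2}(e)$, one has $A_{n-1}(v_1) \cap A_{n-1}(v_2) = B_{n-2}(e) \setminus \{v_1, v_2\}$. I plan to prove this by a direct coordinate check: a voxel $u$ is $(n-1)$-adjacent to $v_i$ iff its center differs from that of $v_i$ in exactly one coordinate by $\pm 1$; since the centers of $v_1$ and $v_2$ differ in exactly the first two coordinates by $1$, the only centers satisfying both adjacency conditions simultaneously are those of the two remaining corners of the square.

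With this identification in hand, the forward direction is short. Assuming an $(n-2)$-gap over $e$, Definition~\ref{Def:gapn} gives that $B_{n-2}(e) \setminus D$ is a two-element tandem, so $B_{n-2}(e) \cap D$ consists of the other two voxels of the block, which by the geometry of the square must form the opposite diagonal. Calling these $v_1, v_2$, conditions~1) and~2) are immediate, and condition~3) follows from the pivotal claim because $A_{n-1}(v_1) \cap A_{n-1}(v_2)$ equals the tandem pair, which lies outside $D$.

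For the converse, I plan to take $v_1, v_2 \in D$ satisfying 1), 2), 3), with the membership in $D$ being implicit from the fact that we are analyzing the gap of $D$ itself. Conditions~1) and~2) locate $v_1, v_2$ as the endpoints of a diagonal in the square $B_{n-2}(e)$; since both bound the $(n-2)$-cell $e$ and they are not $(n-1)$-adjacent, their intersection must be $e$ itself. The pivotal claim identifies the other two voxels $u_1, u_2$ of the block with $A_{n-1}(v_1) \cap A_{n-1}(v_2)$, so condition~3) forces $u_1, u_2 \notin D$. Hence $B_{n-2}(e) \setminus D = \{u_1, u_2\}$ is a strictly $(n-2)$-adjacent pair sharing $e$, i.e., an $(n-2)$-tandem, and Definition~\ref{Def:gapn} then yields the gap. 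Aside from the coordinate verification of the pivotal claim, every other step is pure unpacking of definitions.
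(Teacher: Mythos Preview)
Your argument is correct and in fact cleaner than the paper's. Both proofs exploit that $B_{n-2}(e)$ is a $2\times 2$ square of four voxels, but the paper argues each condition by a short contradiction, while you isolate and verify the key geometric fact
\[
A_{n-1}(v_1)\cap A_{n-1}(v_2)=B_{n-2}(e)\setminus\{v_1,v_2\}
\]
via coordinates, and then read off both directions directly. This explicit identification is the real content of the proposition, and making it a separate claim is a genuine improvement.

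There is also a substantive difference in the choice of $v_1,v_2$. You take them to be the two voxels of $D\cap B_{n-2}(e)$, whereas the paper's forward proof sets $v_1,v_2$ to be the complementary pair in $B_{n-2}(e)\setminus D$. With the paper's choice, your pivotal claim shows that $A_{n-1}(v_1)\cap A_{n-1}(v_2)$ equals precisely the two voxels that \emph{do} lie in $D$, so condition~3) would actually fail; the paper's contradiction (``$\{v_1,v_2,v_3\}$ forms an $L$-block, a contradiction since $v_1,v_2$ are strictly $(n-2)$-adjacent'') does not hold up, since an $L$-block certainly contains a strictly $(n-2)$-adjacent pair. Your reading, with $v_1,v_2\in D$, is the one that makes the equivalence true, and your remark that this membership is implicitly required for the converse is on point: without it, conditions 1)--3) could be satisfied with $B_{n-2}(e)\cap D=\emptyset$, and then no gap exists.
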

\begin{proof}
Let us suppose that $D$ has an $(n-2)$-gap over an $(n-2)$-hub $e$.
Then there exists an $(n-2)$-block $B = B_{n-2}(e)$ such that $B
\setminus D$ is an $(n-2)$-tandem over $e$. Hence $B \setminus D$ is
composed of two strictly $(n-2)$-adjacent $n$-voxel, let us say
$v_1, v_2$, and $v_1 \cap v_2 = e$. This implies that $e \subset
v_1$ and $e \subset v_2$, and so $e < v_1$ and $e < v_2$.
\\
Now, let us suppose that $v_1 \notin  A_{n-2}(v_2) \setminus
A_{n-1}(v_2)$. Then it should be $v_1 \notin A_{n-2}( v_2 )$ or $v_1
\in A_{n-1}(v_2)$. Both expressions  lead to a contradiction, since
$v_1$ and $v_2$  are strictly $(n-2)$-adjacent.
\\
Finally, let us suppose that $A_{n-1}(v_1) \cap A_{n-1}(v_2) \cap D
\not =\emptyset$. Then it should exists an $n$-voxel $v_3 \in D$
such that $ v_3 \in A_{n-1}(v_1) $ and $ v_3 \in A_{n-1}(v_2) $.
Hence $\{v_1,v_2,v_3\}$ forms an $L$-block. A contradiction since
$v_1$ and $v_2$ are strictly $(n-2)$-adjacent.
\par
Conversely, let us suppose that conditions $1)$, $2)$, and $3)$
hold, and, by contradiction, that for any $(n-2)$-cell $e \in D$, $E
= B_{n-2}(e) \setminus D$ is not an $(n-2)$-tandem over $e$. Then
$E$ is either an $i$-block ($i = n-2, n-1$) or an $L$-block whose
facts contradict our hypothesis.
\end{proof}

\begin{definition}\label{Def:FreeCell}
An $i$-cell $e$  (with $i=0, \ldots, n-1$) of a digital $n$-object $D$ is
free iff $B_i(e) \nsubseteq D$.
\end{definition}

For any $i=0,\ldots,n-1$, we denote by $C_i^*(D)$ (respectively by
$C_i'(D)$) the set of all free (respectively non-free) $i$-cells of
the object $D$. Moreover, we denote by $c_i^*(D)$ (or simply by
$c_i^*$) the number of free $i$-cells of $D$, and by $c_i'(D)$ (or
simply by $c_i'$) the number of non-free cells. It is evident that
$\{C_i^*(D),C_i'(D)  \}$ forms a partition of $C_i(D)$ and that
$c_i= c_i^* + c_i'$.

\begin{definition}\label{Def:Border}
The $i$-border ($i=1,\ldots,n-1$) $bd_i(D)$ of a digital $n$-object
$D$ is the set of all its $i$-cells  such that $B_i(e)$ intersects
both $D$ and $\CC_n \setminus D$. The union of all $i$-borders ($ 0
\le i\le n-1$) is called border of $D$ and denoted by $bd(D)$.
\end{definition}

An immediate consequence of Definitions \ref{Def:FreeCell} and
\ref{Def:Border} is given by the following proposition.
\begin{proposition}
An $i$-cell $e$ ($i = 0, \ldots, n-1$) of a digital object $D$ is free iff $e \in bd(D)$.
\end{proposition}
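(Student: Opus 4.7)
The plan is to derive the equivalence as a direct bookkeeping of the two definitions. Observe that the border condition in Definition \ref{Def:Border} requires $B_i(e)$ to meet both $D$ and $\CC_n\setminus D$, whereas the freeness condition in Definition \ref{Def:FreeCell} only demands $B_i(e)\not\subseteq D$, that is, $B_i(e)\cap(\CC_n\setminus D)\neq\emptyset$. The two conditions therefore differ solely by the extra clause $B_i(e)\cap D\neq\emptyset$, and the entire proof reduces to explaining why this extra clause is automatic for any $i$-cell of $D$.

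For the forward implication, I would first translate freeness via Definition \ref{Def:FreeCell} into $B_i(e)\cap(\CC_n\setminus D)\neq\emptyset$, which is exactly one half of the border condition. To supply the other half, I would use that $e\in C_i(D)$ means, in the grid cell model, that $e$ is contained in (hence bounds) at least one $n$-voxel $v\in D$; by Remark \ref{Not:Numero di nVoxel che compone un blocco e un L blocco}, such a $v$ belongs to $B_i(e)$, so $B_i(e)\cap D\neq\emptyset$. Combining, $e\in bd_i(D)\subseteq bd(D)$.

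For the converse, I would observe that since $\dim(e)=i$, the membership $e\in bd(D)=\bigcup_{j=0}^{n-1} bd_j(D)$ forces $e\in bd_i(D)$ by the dimensional stratification of the border. In particular $B_i(e)\cap(\CC_n\setminus D)\neq\emptyset$, hence $B_i(e)\not\subseteq D$, and so $e$ is free by Definition \ref{Def:FreeCell}.

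The main (and essentially only) subtlety is recognizing that being an $i$-cell of $D$ automatically guarantees $B_i(e)\cap D\neq\emptyset$; once that is noted, both implications collapse to one-line definitional checks, with no further argument required.
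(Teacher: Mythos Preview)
Your proposal is correct and matches the paper's approach: the paper does not give an explicit proof at all, merely labeling the proposition ``an immediate consequence of Definitions \ref{Def:FreeCell} and \ref{Def:Border}.'' Your write-up simply unpacks that immediate consequence, and the one nontrivial observation you isolate---that an $i$-cell of $D$ necessarily bounds some $n$-voxel of $D$, so that $B_i(e)\cap D\neq\emptyset$ holds automatically---is precisely the hidden step the paper is taking for granted.
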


\begin{remark}\label{Rem:1}
The border $bd(D)$ of a digital $n$-object is composed of the set of
all free cells of $D$. Moreover, $c'_i$ coincides with the number of
all $i$-blocks $B_i(e)$ such that $B_i(e) \subseteq D$.
\end{remark}

\section{Main Results}\label{sez:main res}

\begin{definition}\label{Def: Fiore}
Let $e$ be an $i$-cells of $\CC_n$. The $j$-flower of $e$ ($i<j\le
n$) is the set of cells $F_j(e)$ constituted by all $j$-cells that
are bounded by $e$, that is we set $F_j(e)=\{c \in \CC_n^{(j)}
\colon e < c\}$. The cell $e$ is called the center of the flower,
while an element of $F_j(e)$ is called a $j$-petal (or simply petal
if confusion does not arise) of the $j$-flower $F_j(e)$.
\end{definition}

Let us note that Definition \ref{Def: Fiore} is a generalization of
the notion of $i$-block given in Definition \ref{Def:blocchi and
L-block}. Indeed an $i$-block centered on an $i$-cell $e$ can be
considered like the $n$-flower of $e$.

\begin{notation}
Let $i,j$ be two natural number such that $0 \le i < j$. We denote
by $ c_{i \rightarrow j} $ the maximum number of $i$-cells of
$\CC_n$ that bound a $j$-cell. Moreover, we denote by $c_{i
\leftarrow j}$ the maximum number of $j$-cell of $\CC_n$ that are
bounded by an $i$-cell.
\end{notation}
Let us note that, for any $0 \le i < j$, $c_{i \leftarrow j}$
represents the number of $j$-petal of the $j$-flower $F_j(e)$, where
$e$ is a cell of dimension $i$.

\begin{proposition}\label{Pro:numero massimo di i-celle che limitano una data j-cella}
For any $i,j \in \mathbb{N}$ such that $0 \le i <j$, it is
\[
c_{i \rightarrow j} = 2^{j-i} \binom{j}{i}.
\]
\end{proposition}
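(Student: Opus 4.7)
The plan is to fix an arbitrary $j$-cell $f = (y, \psi)$ and directly count the $i$-cells of $\CC_n$ that bound it. By Definition \ref{Def: relazione di limitatezza} this is the same as counting the $i$-dimensional cells contained in $f$. Since the answer will depend only on $i$, $j$ and $n$ and not on the particular choice of $y$ or $\psi$, every $j$-cell achieves the same count, and hence the maximum $c_{i\to j}$ equals this common value.

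The main step is a coordinate-by-coordinate analysis based on the canonical product representation of Remark \ref{Not:Su_espr_cell}. In each of the $n-j$ directions with $\psi_k \neq 0$, the cell $f$ collapses to a singleton, so any $i$-cell $e \subseteq f$ must reproduce exactly that singleton in coordinate $k$, contributing nothing to $\dim(e)$. In each of the remaining $j$ directions with $\psi_k = 0$, $f$ carries the unit interval $[y_k - \tfrac{1}{2}, y_k + \tfrac{1}{2}]$; since every cell of $\CC_n$ has each factor equal either to an integer-centred unit interval or to a half-integer singleton, the $k$-th factor of $e$ must be one of exactly three options: the full interval, or one of its two endpoints. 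The dimension constraint $\dim(e) = i$ then forces exactly $i$ of these $j$ factors to be the full interval.

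The count is now immediate: choosing which $i$ of the $j$ "interval coordinates" of $f$ remain intervals in $e$ gives $\binom{j}{i}$ possibilities, and in each of the $j-i$ coordinates that collapse, a binary choice between the left and the right endpoint contributes an independent factor of $2$. Distinct choices produce distinct cells, yielding $c_{i\to j} = 2^{j-i}\binom{j}{i}$. The only delicate point, and the place I would be most careful, is justifying the \emph{three}-way restriction in the interval directions: sub-cells of $\CC_n$ are not arbitrary closed sub-rectangles of $f$ but rather products of length-$1$ intervals centred at integer coordinates and singletons at half-integer coordinates, which is precisely what cuts the continuum of possible sub-intervals of $[y_k - \tfrac{1}{2}, y_k + \tfrac{1}{2}]$ down to those three admissible factors.
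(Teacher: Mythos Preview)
Your argument is correct. The coordinate-by-coordinate analysis is sound: in the $n-j$ singleton directions the factor of any sub-cell is forced, while in each of the $j$ interval directions the only admissible factors from $\CC_n$ are the full unit interval or one of its two half-integer endpoints, and the dimension constraint then yields the count $\binom{j}{i}\,2^{j-i}$ exactly as you describe.

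The paper takes a shorter route: it simply observes that a $j$-cell is combinatorially a $j$-dimensional hypercube and invokes the classical formula for the number of $i$-faces of a $j$-cube, citing Coxeter's \emph{Regular Polytopes}. Your proof is thus a self-contained derivation of precisely the fact that the paper quotes. What you gain is independence from the external reference and an argument phrased directly in the grid-cell model of the paper (via Remark~\ref{Not:Su_espr_cell}); what the paper's version gains is brevity. Substantively the two amount to the same count, so there is no genuine divergence in strategy, only in level of detail.
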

\begin{proof}
Since a $j$-cell of $\CC_n$ can be considered like an hypercube of
dimension $j$, the number $c_{i \rightarrow j}$ corresponds with the
number of $i$-faces of this hypercube which is $2^{j-i}
\binom{j}{i}$ (see, for example, \cite{libroPolitopi}).
\end{proof}

\begin{proposition}\label{Pro:numero massimo di i-celle che sono limitate da una j-cella fissata}
For any $i,j \in \mathbb{N}$ such that $0 \le i <j$, it is
\[
c_{i \leftarrow j} = 2^{j-i}\binom{n-i}{j-i}.
\]
\end{proposition}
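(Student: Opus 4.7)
The plan is to fix an arbitrary $i$-cell $e=(x,\theta)$ of $\CC_n$ and count, coordinate by coordinate, the $j$-cells $f=(y,\psi)$ that contain $e$; by translation invariance this count is the same for every $i$-cell, so it coincides with $c_{i\leftarrow j}$. The key algebraic fact is that, by Remark~\ref{Not:Su_espr_cell}, a cell is uniquely determined by its center together with the pattern of zero/non-zero entries of its direction: in each coordinate $\ell$, $e$ is the interval $[cnt(e)_\ell-\tfrac12,cnt(e)_\ell+\tfrac12]$ if $\theta_\ell=0$, and the singleton $\{cnt(e)_\ell\}$ otherwise.

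The next step is to translate the bounding relation $e<f$ into coordinate-wise constraints. In the $i$ coordinates where $\theta_\ell=0$, $e$ occupies a closed unit interval, and the only way a factor of $f$ (which is either a unit interval or a singleton with half-integer center) can contain it is to be that very interval; hence $\psi_\ell=0$ and $cnt(f)_\ell=cnt(e)_\ell$ are forced, giving exactly one choice per such coordinate. In the remaining $n-i$ coordinates, where $\theta_\ell\neq 0$ and $e$ is the singleton $\{cnt(e)_\ell\}$, we have two sub-cases: either $\psi_\ell\neq 0$ and necessarily $cnt(f)_\ell=cnt(e)_\ell$ (one choice), or $\psi_\ell=0$ and $cnt(f)_\ell\in\{cnt(e)_\ell-\tfrac12,cnt(e)_\ell+\tfrac12\}$ (two choices).

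Now I count. Since $\dim(f)=j$ means $\psi$ has exactly $j$ zero entries, and $i$ zero entries are already forced, exactly $j-i$ of the $n-i$ free coordinates must be of the second type (with $\psi_\ell=0$). Choosing which ones contributes a factor $\binom{n-i}{j-i}$, and the independent choice of center in each of those coordinates contributes a factor $2^{j-i}$. Multiplying gives
\[
c_{i\leftarrow j}\;=\;2^{j-i}\binom{n-i}{j-i}.
\]

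The only delicate point is the argument that every combination enumerated above really produces a distinct $j$-cell of $\CC_n$ containing $e$ (no overcounting, no infeasibility). This is immediate from the uniqueness of the centre/direction-pattern representation, so there is no serious obstacle; the main care is to handle the $\psi_\ell\neq 0$ sub-case correctly (both signs of $\psi_\ell$ describe the same singleton cell, so they must not be double counted). Everything else reduces to the straightforward product rule.
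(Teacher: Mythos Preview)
Your argument is correct: the coordinate-by-coordinate analysis of the inclusion $e\subseteq f$ is sound, and you handle the one genuine subtlety (that the two signs of $\psi_\ell$ in a non-zero coordinate give the same singleton factor) explicitly, so there is no overcounting.

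However, your route is genuinely different from the paper's. The paper does not count directly; instead it invokes the duality $e\mapsto e'$ developed in Section~\ref{sez:preliminari} to argue that the $j$-cells bounded by a fixed $i$-cell $e$ correspond bijectively to the $(n-j)$-cells bounding the $(n-i)$-cell $e'$, whence $c_{i\leftarrow j}=c_{\,n-j\rightarrow n-i}$, and then reads off the value from Proposition~\ref{Pro:numero massimo di i-celle che limitano una data j-cella}. Your approach is more elementary and fully self-contained: it needs neither the dual-pair machinery nor the prior formula for $c_{i\rightarrow j}$, and in fact it would yield that formula as well (set $i\to 0$ inside a $j$-cube, or argue symmetrically). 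The paper's approach, on the other hand, is shorter once duality is in place and makes the structural reason for the symmetry $c_{i\leftarrow j}=c_{\,n-j\rightarrow n-i}$ transparent.
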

\begin{proof}
Let $e$ be an $i$-cell of $\CC_n$, and let $F_j(e)$ be the related
$j$-flower. The dual $\Phi'$ of $\Phi = F_j(e) \cup \{e\}$ is an
incidence structure $(V, \mathcal B, \mathcal I)$, where $V= \{ p'
\colon p \in F_j(e) \}$, $\mathcal B = \{ e' \}$ and $\mathcal I$ is
the dual relation of the bounding relation $<$. Moreover, we have
$\dim(e') = n - i$ and $\dim(p')=n-j$. Hence, up to a bijection,
$\Phi'$ is the set composed of the $(n-i)$-cell $e'$ and by all the
possible $(n-j)$-cells which bound $e'$ . It follows that the maximum number $c_{i \leftarrow j}$ of
$j$-cells that are bounded by a given $i$-cell coincides with the
maximum number of $(n-j)$-cells that bound an $(n-i)$-cell, that is,
by Proposition \ref{Pro:numero massimo di i-celle che limitano una
data j-cella},
\[ c_{i \leftarrow j} = c_{n-j \rightarrow n-i} = 2^{n-i-n+j}\binom{n-i}{n-j} = 2^{j-i}\binom{n-i}{j-i}. \]
\end{proof}

\begin{lem}\label{Lem:come trovare il numero di n-1celle}
Let $D$ be a digital $n$-object. Then
\[ c_{n-1} = 2n c_n - c_{n-1}'. \]
\end{lem}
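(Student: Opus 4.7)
The plan is to apply the fundamental identity of incidence structures to a suitably chosen pair (points, blocks, incidence). Specifically, I would take as points the set $V = C_{n-1}(D)$ of $(n-1)$-cells of $D$, as blocks the set $\mathcal{B} = C_n(D)$ of $n$-voxels, and as incidence the bounding relation $<$, that is, $e\,\mathcal I\,v$ iff $e < v$.

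First I would compute the degree $k_v$ of an arbitrary block $v \in C_n(D)$. Since every $(n-1)$-cell contained in $v$ is automatically in $C_{n-1}(D)$, the degree $k_v$ equals the total number of $(n-1)$-cells of $\CC_n$ that bound $v$. By Proposition \ref{Pro:numero massimo di i-celle che limitano una data j-cella}, this number is $c_{n-1 \rightarrow n} = 2^{n-(n-1)}\binom{n}{n-1} = 2n$, independently of $v$. Consequently $\sum_{v \in \mathcal B} k_v = 2n \, c_n$.

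Next I would determine the degree $r_e$ of a point $e \in C_{n-1}(D)$. By Proposition \ref{Pro:numero massimo di i-celle che sono limitate da una j-cella fissata}, an $(n-1)$-cell is bounded by at most $c_{n-1 \leftarrow n} = 2$ $n$-cells of $\CC_n$, so the block $B_{n-1}(e)$ consists of exactly two $n$-voxels. Using Definition \ref{Def:FreeCell}, if $e$ is non-free then $B_{n-1}(e) \subseteq D$, so both voxels lie in $D$ and $r_e = 2$; if $e$ is free, then since $e \in C_{n-1}(D)$ at least one such voxel is in $D$ while at least one is not, giving $r_e = 1$. Therefore
\[
\sum_{e \in V} r_e = 2\,c_{n-1}' + 1\cdot c_{n-1}^{*} = c_{n-1}' + (c_{n-1}' + c_{n-1}^{*}) = c_{n-1} + c_{n-1}'.
\]

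Finally, the equation (\ref{Eq: Equazione fondamentale delle strutture di incidenza}) of Proposition \ref{Pro:relazione fondamentale delle strutture di incidenza} yields $c_{n-1} + c_{n-1}' = 2n\,c_n$, which rearranges to the stated identity. There is no real obstacle here; the only subtlety is to correctly justify that an $(n-1)$-cell in $C_{n-1}(D)$ is contained in exactly $2$ $n$-voxels of $D$ when non-free and in exactly $1$ when free, which follows directly from the definition of free cell combined with $c_{n-1 \leftarrow n} = 2$.
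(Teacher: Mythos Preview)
Your proof is correct and follows essentially the same double-counting argument as the paper: the paper builds the flag set $F=\{(e,v): e\in C_{n-1}(D),\, v\in C_n(D),\, e<v\}$ and computes $|F|$ in two ways, which is exactly the identity $\sum_e r_e=\sum_v k_v$ you invoke from Proposition~\ref{Pro:relazione fondamentale delle strutture di incidenza}. The only cosmetic difference is that the paper proves $r_e=1$ for free $e$ via an injectivity argument rather than citing $c_{n-1\leftarrow n}=2$ directly, but the content is identical.
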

\begin{proof}
Let us consider the set
\[ F = \bigcup_{v \in   C_n(D)} \{ (e,v) \colon e \in C_{n-1}(D), e < v \}.\] It is evident that $ \big|F \big| =  \Big|\{ (e,v) \colon  e \in C_{n-1} (D), e < v \} \Big|   \cdot  \Big| C_{n}(D)\Big|  = c_{n-1 \rightarrow n} \cdot c_n = 2 n c_n $.
Let us set $F^* = F \cap (C_{n-1}^*(D) \times C_{n}(D))$ and $F' = F \cap (C_{n-1}'(D) \times C_{n}(D))$. The map $\phi \colon F^* \to C_{n-1}^*(D)$, defined by $\phi(e,v)=e$, is a bijection. In fact, besides being evidently surjective, it is also injective, since, if by contradiction there were two distinct pairs $(e,v_1)$ and $(e,v_2) \in F^*$ associated to $e$, then $B_{n-1}(e)= \{v_1,v_2\}$ should be an $(n-1)$-block contained in $D$. This contradicts the fact that the $(n-1)$-cell $e$ is free. Thus $|F^*| = |C_{n-1}^*(D)| = c_{n-1}^*$.
\\
On the other hand, $\big|F' \big| = \Big| \displaystyle \bigcup_{v
\in   C_n (D)} \{ (e,v) \colon e \in C'_{n-1}(D), e < v \} \Big| =
\Big|\displaystyle \bigcup_{e \in   C'_{n-1} (D)} \{ (e,v) \colon v
\in C_{n}(D), e < v \} \Big| = \Big|\{ (e,v) \colon  v \in C_{n}
(D), e < v \} \Big| \cdot  \Big| C'_{n-1}(D)\Big|  = c_{n-1
\leftarrow n} \cdot c'_{n-1} = 2 c'_{n-1}$. Since $\{ F^*, F' \}$ is
a partition of $F$, we finally have that $| F | = |F^*| + |F'|$,
that is $   2n c_n = c_{n-1}^* + 2 c_{n-1}' = c_{n-1} -  c_{n-1}' +
2  c_{n-1}' = c_{n-1} + c_{n-1}'$, and then the thesis.
\end{proof}

\begin{notation}
Let $e$ be an $i$-cell of a digital $n$-object $D$, and $0 \le i <
j$. We denote by $b_j(e,D)$ (or simply by $b_j(e)$ if no confusion
arises) the number of $j$-cells of $bd(D)$ that are bounded by $e$.
\end{notation}
Let us note that if $e$ is a non-free $i$-cell, then $b_j(e) = 0$.

\begin{definition}{\label{Def: no gap}}
A free $i$-cell of a digital $n$-object that is not an $i$-hub is
called $i$-nub.
\end{definition}

\begin{notation}\label{Not:Nota sulla cardinalità di H e N}
For any $i=0, \ldots, n-1$, we denote by $\mathcal H_{i}(D)$ and by
$\mathcal N_{i}(D)$ (or simply by $\mathcal H_{i}$ and by $\mathcal
N_{i}$ if no confusion arises) the sets of $i$-hubs and $i$-nubs of
$D$, respectively. We have $|\mathcal H_{i}| = g_{i}$ and $|
\mathcal N_{i} | = c^*_i - g_i$.
\end{notation}

%------------------------------------------------------------------------------------------------------------------
\begin{figure}
  \centering
  \includegraphics{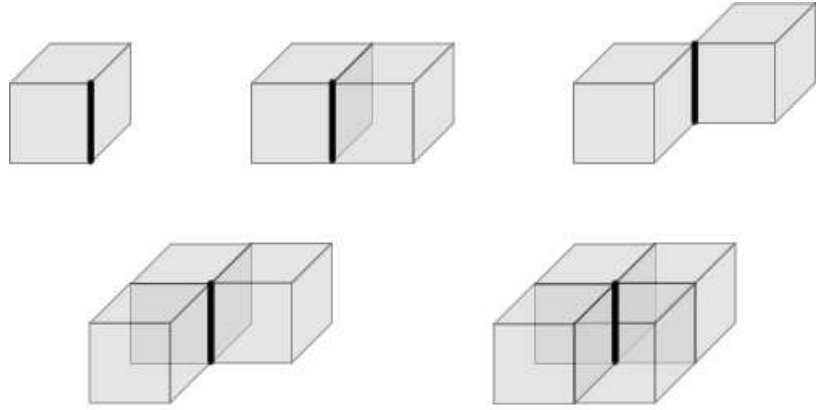}\\
  \caption{The five possible cases for the set $V = \{ v \in \CC_n^{(n)}
\colon e<v \}$ in $3$D case. The black thick segment represents the edge $e$.}\label{Fig:Ese1}
\end{figure}
%--------------------------------------------------------------------------------------------------------------------

We are interested in classifying all the
possible configurations  of $n$-voxels
bounded by an $(n-2)$-cell $e$.

\begin{lem}
Let $e$ be an $(n-2)$-cell of $\CC_n$, and $V = \{ v \in \CC_n^{(n)}
\colon e<v \}$  be the set of $n$-voxels bounded by $e$. Then one
and only one of the following five cases occurs (See Figure
\ref{Fig:Ese1} for an example for $3$D case):
\begin{itemize}
  \item $V$ is a singleton and $e$ is a simple cell;
  \item $V$ is an $(n-1)$-block centered on an $(n-1)$-cell that is bounded by $e$;
  \item $V$ is $(n-2)$-gap and $e$ is its $(n-2)$-hub;
  \item $V$ is an $L$-block and $e$ is its center;
  \item $V$ is an $(n-2)$-block and $e$ is its center.
\end{itemize}
\end{lem}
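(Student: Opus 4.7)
The plan is a case analysis on the cardinality $|V|$ (implicitly understood as the set of voxels of $D$ bounded by $e$, since literally as written $V$ would always be the ambient $(n-2)$-block). By Proposition \ref{Pro:numero massimo di i-celle che sono limitate da una j-cella fissata} with $i=n-2$ and $j=n$, we obtain $c_{n-2\leftarrow n}=2^{2}\binom{2}{2}=4$, so $B_{n-2}(e)$ consists of exactly four $n$-voxels; hence $|V|\in\{1,2,3,4\}$, and each value will be handled in turn.

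The extreme cardinalities are essentially tautological. If $|V|=1$, then $e$ bounds a unique $n$-voxel and is therefore simple by Definition \ref{Def:cella semplice}. If $|V|=4$, then $V=B_{n-2}(e)$ by inclusion together with equality of cardinalities. If $|V|=3$, then $V$ is obtained from $B_{n-2}(e)$ by removing a single $n$-voxel, which is precisely the definition of an $L$-block centered on $e$ (Definition \ref{Def:blocchi and L-block}).

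The substantive case is $|V|=2$, and the crux is to identify how the two voxels meet. For any two distinct $v_1,v_2\in B_{n-2}(e)$, the intersection $v_1\cap v_2$ is a face of both, hence a cell of $\CC_n$; it contains $e$, so its dimension is at least $n-2$, and since $v_1\neq v_2$ its dimension is at most $n-1$. If $\dim(v_1\cap v_2)=n-1$, then $f:=v_1\cap v_2$ is an $(n-1)$-cell bounded by $e$, the two voxels are strictly $(n-1)$-adjacent, and since $c_{n-1\leftarrow n}=2$ they exhaust $B_{n-1}(f)$; thus $V=B_{n-1}(f)$ is the $(n-1)$-block centered on an $(n-1)$-cell bounded by $e$. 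If $\dim(v_1\cap v_2)=n-2$, then $v_1\cap v_2$ must coincide with $e$ (an $(n-2)$-cell containing the $(n-2)$-cell $e$ can only be $e$ itself), so $v_1$ and $v_2$ are strictly $(n-2)$-adjacent and share only $e$, while the remaining two voxels of $B_{n-2}(e)$ lie outside $D$; consequently $B_{n-2}(e)\setminus D$ is an $(n-2)$-tandem over $e$ and $D$ admits an $(n-2)$-gap with hub $e$, consistently with Proposition \ref{Pro:equivalence nmenodue tunnel}.

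The main obstacle is precisely this dichotomy within $|V|=2$: one must argue that two voxels sharing the $(n-2)$-block $B_{n-2}(e)$ cannot meet in anything other than $e$ itself or an $(n-1)$-cell bounded by $e$, and then match each alternative to its named configuration. Mutual exclusivity of the five cases then falls out: the three cardinalities $1,3,4$ already separate cases (i), (iv), (v), while within $|V|=2$ the two sub-cases are distinguished by whether $v_1$ and $v_2$ share an $(n-1)$-cell.
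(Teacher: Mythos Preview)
Your argument is correct and follows essentially the same route as the paper: both proofs perform a case analysis on the number of $n$-voxels in $V\cap D$, using that $B_{n-2}(e)$ contains exactly four $n$-voxels, and both split the case $|V|=2$ according to whether the two voxels are strictly $(n-1)$- or strictly $(n-2)$-adjacent. Your treatment of the $|V|=2$ dichotomy is in fact more explicit than the paper's (which simply asserts the two sub-cases ``depending on the relative position''), and your opening remark that $V$ must be read as $V\cap D$ correctly resolves an ambiguity in the statement.
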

\begin{proof}
By Definition \ref{Def:blocchi and L-block}(1),  the largest set of
$n$-voxels bounded by $e$ is the $(n-2)$-block centered on $e$.
Moreover, by Remark \ref{Not:Numero di nVoxel che compone un blocco
e un L blocco},  $c_n(B_{n-2}(e)) = 4$. Hence, the number $c_n(V)$
of $n$-voxels of $V$ have to be between one and four and,
up to symmetries, we can distinguish the following
cases.
\\
If $c_n(V) = 1$, $V$ is a single $n$-voxel.
If $c_n(V) = 2$,  we have two configurations, depending on the
relative position of the two $n$-voxels $v_1$ and $v_2$. More
precisely, if $v_1$ and $v_2$ are strictly $(n-1)$-adjacent, then
they form an $(n-1)$-block centered on an $(n-1)$-cell that is
bounded by $e$; instead, if they are strictly $(n-2)$-adjacent, they
form an $(n-2)$-gap having $e$ as $(n-2)$-hub.
If $c_n(V) = 3$, by Definition \ref{Def:blocchi and L-block}(2) and
Remark \ref{Not:Numero di nVoxel che compone un blocco e un L
blocco}, the unique possible configuration is given by the $L$-block
centered on $e$.
Finally, if $c_n(V) = 4$,  $V$ coincides with the $(n-2)$-block
centered on $e$.
\end{proof}

\begin{proposition}\label{Pro:12}
Let $v$ be an $n$-voxel and $e$ be one of its $i$-cells, $i=0,\ldots, n-1$.
Then, for any $i < j \le n$, it results:
\[b_j(e) = \frac{c_{i \rightarrow j}c_{j \rightarrow n}}{c_{i \rightarrow n}} .\]
\end{proposition}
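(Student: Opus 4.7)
The plan is to apply the fundamental incidence relation of Proposition \ref{Pro:relazione fondamentale delle strutture di incidenza} to an incidence structure naturally associated with the $n$-voxel $v$. I set the points to be the $i$-cells of $\CC_n$ bounded by $v$, the blocks to be the $j$-cells of $\CC_n$ bounded by $v$, and the incidence relation to be the restriction of the bounding relation $<$ to this pair. Since $v$ is an $n$-cell, the number of points equals the maximum number of $i$-cells that bound an $n$-cell, namely $c_{i\rightarrow n}$, and likewise the number of blocks equals $c_{j\rightarrow n}$.

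Next I identify the two degrees appearing in the fundamental relation. The degree of any block $f$ is the number of $i$-cells that bound the $j$-cell $f$, which by definition equals $c_{i\rightarrow j}$ and is therefore the same for every block. The degree of a point $e$ is, by construction, precisely the quantity $b_j(e)$ we want to compute. A short symmetry argument shows that $b_j(e)$ does not depend on which $i$-face of $v$ is chosen: any two $i$-faces of $v$ are swapped by an automorphism of the hypercube $v$, and such an automorphism preserves the bounding relation, hence bijects the $j$-faces of $v$ bounded by one onto those bounded by the other. Call this common value $r$.

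Applying Proposition \ref{Pro:relazione fondamentale delle strutture di incidenza} then gives
\[ c_{i\rightarrow n}\cdot r \;=\; c_{j\rightarrow n}\cdot c_{i\rightarrow j}, \]
and isolating $r = b_j(e)$ produces the claimed formula. The main subtlety is justifying that $b_j(e)$ is indeed independent of the particular $i$-face $e$ of $v$. If one prefers to avoid invoking hypercube symmetries, the same independence can be obtained by unfolding the direction representation of $e$ via Remark \ref{Not:Su_espr_cell}: a $j$-cell $f$ with $e<f<v$ corresponds to choosing $j-i$ of the $n-i$ coordinates $k$ where $\theta_k\neq 0$ and promoting the corresponding singletons to the full $v$-intervals, leaving the other $n-j$ such coordinates as singletons; this yields $\binom{n-i}{j-i}$ choices, in agreement with the ratio $c_{i\rightarrow j}c_{j\rightarrow n}/c_{i\rightarrow n}$ computed via Propositions \ref{Pro:numero massimo di i-celle che limitano una data j-cella} and \ref{Pro:numero massimo di i-celle che sono limitate da una j-cella fissata}.
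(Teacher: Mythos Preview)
Your proof is correct and follows essentially the same approach as the paper: both set up the incidence structure $(C_i(v), C_j(v), <)$ and apply Proposition~\ref{Pro:relazione fondamentale delle strutture di incidenza} with $|C_i(v)|=c_{i\rightarrow n}$, $|C_j(v)|=c_{j\rightarrow n}$, and block degree $c_{i\rightarrow j}$. Your version is in fact slightly more careful, since you explicitly justify (via hypercube symmetry or a direct count) that the point degree $b_j(e)$ is the same for every $i$-face $e$ of $v$, a fact the paper simply asserts.
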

\begin{proof}
Let us consider the incidence structure $I=(C_i( v ),C_j( v ),<)$.
By Proposition \ref{Pro:relazione fondamentale delle strutture di
incidenza}, it is $\displaystyle \sum_{a \in C_i( v )} r_a = \sum_{a
\in C_j( v )} k_a$. Evidently, $|C_i( v )| = c_i = c_{i \rightarrow
n}$ and $| C_j( v ) | = c_j  = c_{j \rightarrow n}$, while, for any
$i$-cell $a$ of $C_i( v )$ (respectively $j$-cell $a$ of $C_j( v
)$), $ r_a = b_j(e) $ (respectively $ k_a =  c_{i \rightarrow j}$).
Hence we have $ b_j(e) c_{i \rightarrow n} = c_{i \rightarrow j}
c_{j \rightarrow n} $, from which we get the thesis.
\end{proof}

\begin{cor}\label{Cor:13}
Let $v$ be an $n$-voxel and $e$ be one of its $i$-cell, $i=0,\ldots,
n-1$. Then, for any $i < j \le n$, we have
\[ b_j(e) = \binom{n-i}{j-i} . \]
\end{cor}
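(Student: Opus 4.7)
The plan is to derive this corollary as a direct computational consequence of Proposition \ref{Pro:12} combined with the closed form for $c_{i \rightarrow j}$ established in Proposition \ref{Pro:numero massimo di i-celle che limitano una data j-cella}. All the geometric content has already been packaged into those two statements, so no new structural argument is needed here.

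First I would invoke Proposition \ref{Pro:12} to write
\[ b_j(e) = \frac{c_{i \rightarrow j}\, c_{j \rightarrow n}}{c_{i \rightarrow n}} . \]
Then I would substitute the three values $c_{i \rightarrow j} = 2^{j-i}\binom{j}{i}$, $c_{j \rightarrow n} = 2^{n-j}\binom{n}{j}$, and $c_{i \rightarrow n} = 2^{n-i}\binom{n}{i}$, each given by Proposition \ref{Pro:numero massimo di i-celle che limitano una data j-cella}.

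The last step is to simplify the resulting expression. The powers of two telescope immediately, since $2^{j-i}\cdot 2^{n-j}/2^{n-i}=1$, and the binomial part reduces to $\binom{j}{i}\binom{n}{j}/\binom{n}{i}$. This equals $\binom{n-i}{j-i}$ by the standard \emph{subset-of-a-subset} identity $\binom{n}{j}\binom{j}{i}=\binom{n}{i}\binom{n-i}{j-i}$, which is verified in one line by expanding all binomials as ratios of factorials.

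I do not foresee any real obstacle: everything is algebraic rearrangement once the previous two results are in hand. The only point that merits an explicit sentence is the binomial identity above, since readers may not recall it by name; a brief factorial expansion would suffice to justify it and close the proof.
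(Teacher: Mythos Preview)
Your proposal is correct and follows essentially the same route as the paper: invoke Proposition~\ref{Pro:12}, substitute the closed forms $c_{i\to j}=2^{j-i}\binom{j}{i}$ from Proposition~\ref{Pro:numero massimo di i-celle che limitano una data j-cella}, cancel the powers of two, and reduce the binomial quotient to $\binom{n-i}{j-i}$ via factorial expansion. The only cosmetic difference is that the paper writes out the factorial manipulation explicitly rather than naming the subset-of-a-subset identity, but the content is identical.
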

\begin{proof}
By Proposition \ref{Pro:12}, it is
\begin{equation*}
    b_j(e) = \frac{c_{i \rightarrow j}c_{j \rightarrow n}}{c_{i \rightarrow n}} = \frac{2^{j-i} \binom{j}{i} 2^{n-j} \binom{n}{j}}  {2^{n-i} \binom{n}{i}} = \frac{j!}  { (j-i)!i!} \cdot \frac{n!}{(n-j)!j!} \cdot \frac{(n-i)!i!}{n!}=\frac{(n-i)!}{(n-j)!(j-i)!} = \binom{n-i}{j-i}.
\end{equation*}
\end{proof}

\begin{lem}\label{Lem:Lemma 1b}
Let $e$ be an $(n-1)$-cell of $\CC_n$. Then the number of $i$-cells
of the $(n-1)$-block centered on $e$ is
\[ c_i(B_{n-1}(e)) =  \frac {3n+i}{2n} c_{i \rightarrow n}.\]
\end{lem}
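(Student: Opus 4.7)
The plan is to compute $c_i(B_{n-1}(e))$ by inclusion-exclusion on the two $n$-voxels that make up the block. By Remark \ref{Not:Numero di nVoxel che compone un blocco e un L blocco} applied with $i=n-1$, the block $B_{n-1}(e)$ consists of exactly $2^{n-(n-1)}=2$ $n$-voxels, call them $v_1$ and $v_2$, both bounded by $e$; since an $(n-1)$-cell of $\CC_n$ is bounded by exactly two distinct $n$-voxels and $e \subseteq v_1 \cap v_2$, one verifies from the product representation that $v_1 \cap v_2 = e$. Consequently, counting $i$-faces in $v_1 \cup v_2$ via inclusion-exclusion gives
\[ c_i(B_{n-1}(e)) = c_i(v_1) + c_i(v_2) - c_i(e), \]
where $c_i(e)$ denotes the number of $i$-subfaces contained in the closed hypercube $e$.

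Each $n$-voxel is an $n$-dimensional unit hypercube, so Proposition \ref{Pro:numero massimo di i-celle che limitano una data j-cella} yields $c_i(v_1)=c_i(v_2)=c_{i\rightarrow n}=2^{n-i}\binom{n}{i}$. Similarly, the $(n-1)$-cell $e$ is an $(n-1)$-dimensional hypercube, so for $i<n-1$ the same proposition gives $c_i(e)=c_{i\rightarrow n-1}=2^{n-1-i}\binom{n-1}{i}$; the same formula $2^{n-1-i}\binom{n-1}{i}$ also produces the correct value $1$ in the boundary case $i=n-1$ (where only $e$ itself contributes). Thus uniformly in $i\in\{0,\dots,n-1\}$,
\[ c_i(B_{n-1}(e)) = 2\cdot 2^{n-i}\binom{n}{i} - 2^{n-1-i}\binom{n-1}{i}. \]

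The remaining step is a short binomial simplification: factor out $2^{n-1-i}\binom{n}{i}$ and use the identity $\binom{n-1}{i}=\frac{n-i}{n}\binom{n}{i}$ to rewrite the bracket as $4-\frac{n-i}{n}=\frac{3n+i}{n}$. This produces $c_i(B_{n-1}(e))=\frac{3n+i}{2n}\,c_{i\rightarrow n}$, which is the claimed formula. The only mildly subtle point is confirming that the single expression $2^{n-1-i}\binom{n-1}{i}$ correctly handles the endpoint $i=n-1$, where $e$ itself is the unique $i$-face; beyond that, the proof is inclusion-exclusion plus routine bookkeeping with the counting result of Proposition \ref{Pro:numero massimo di i-celle che limitano una data j-cella}, so no genuine obstacle arises.
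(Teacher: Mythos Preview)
Your argument is correct and mirrors the paper's proof almost exactly: both use inclusion--exclusion over the two $n$-voxels of $B_{n-1}(e)$, identify the shared $i$-cells with those of the center $e$ to get $c_i(B_{n-1}(e))=2c_{i\rightarrow n}-c_{i\rightarrow n-1}$, and then simplify via $\binom{n-1}{i}=\frac{n-i}{n}\binom{n}{i}$. Your additional remarks on the endpoint $i=n-1$ and on why $v_1\cap v_2=e$ are harmless extra care that the paper omits.
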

\begin{proof}
By Remark \ref{Not:Numero di nVoxel che compone un blocco e un L
blocco}, $B_{n-1}(e)$ is composed of two $(n-1)$-adjacent
$n$-voxels. Each of such voxels has exactly $c_{i \rightarrow n}$
$i$-cells, but some of these cells are in common. The number of
these common $i$-cells coincides with the number of $i$-cells of the
center $e$ of the given block. So, we have $\displaystyle c_i(B_{n-1}(e)) = 2 c_{i
\rightarrow n} - c_{i \rightarrow n-1} = 2 \cdot 2 ^{n-i} \binom n i
- 2^{n-1-i} \binom{n-1}{i}= 2 \cdot 2 ^{n-i}  \binom n i - 2
^{n-i-1}  \binom{n}{i} \frac {n-1}{n} =  2 ^{n-i} \binom {n}{i}
\left(  2 - \frac{n-i}{2n} \right) = \frac{3n + i}{2n} c_{i
\rightarrow n }$.
\end{proof}

\begin{lem}\label{Lem:Lemma 2b}
Let $e$ be an $(n-1)$-cell of $\CC_n$. Then the number of free $(n-1)$-cells of the $(n-1)$-block centered on $e$ is:
\[ c^*_{n-1}(B_{n-1}(e)) = 2( 2n -1) . \]
\end{lem}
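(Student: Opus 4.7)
The plan is to view $B_{n-1}(e)$ as a two-voxel digital object and count its free $(n-1)$-cells by first counting all $(n-1)$-cells and then subtracting the non-free ones.

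First, I would apply Lemma \ref{Lem:Lemma 1b} with $i = n-1$. Since $c_{n-1 \rightarrow n} = 2^{1}\binom{n}{n-1} = 2n$, the total number of $(n-1)$-cells in $B_{n-1}(e)$ is
\[
c_{n-1}(B_{n-1}(e)) = \frac{3n + (n-1)}{2n}\cdot 2n = 4n-1.
\]

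Next, I would identify the non-free $(n-1)$-cells. By Definition \ref{Def:FreeCell}, an $(n-1)$-cell $f$ of $B_{n-1}(e)$ is non-free precisely when $B_{n-1}(f) \subseteq B_{n-1}(e)$. But $B_{n-1}(f)$ consists of exactly the two $n$-voxels bounded by $f$ (Remark \ref{Not:Numero di nVoxel che compone un blocco e un L blocco}), so non-freeness means both voxels bounded by $f$ lie in $B_{n-1}(e) = \{v_1,v_2\}$. Since $v_1 \cap v_2 = e$ and $e$ is the unique $(n-1)$-cell contained in both $v_1$ and $v_2$, the cell $e$ itself is the only such $f$. Any other $(n-1)$-cell of $B_{n-1}(e)$ lies in exactly one of $v_1, v_2$ and has its second bounding $n$-voxel outside $B_{n-1}(e)$, hence is free.

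Therefore there is exactly one non-free $(n-1)$-cell in $B_{n-1}(e)$, giving
\[
c^*_{n-1}(B_{n-1}(e)) = (4n-1) - 1 = 2(2n-1),
\]
as required. The only subtle point to check carefully is the claim that $e$ is the unique $(n-1)$-cell shared by $v_1$ and $v_2$, which follows from strict $(n-1)$-adjacency, i.e.\ $v_1 \cap v_2 \in \CC_n^{(n-1)}$; I would spell this out to avoid any ambiguity about other incidences among the $(n-1)$-faces of the two voxels.
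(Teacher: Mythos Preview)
Your argument is correct. Both your proof and the paper's hinge on the same observation that $c'_{n-1}(B_{n-1}(e)) = 1$, i.e.\ that $e$ is the unique non-free $(n-1)$-cell; the difference lies only in how the total is obtained. You invoke Lemma~\ref{Lem:Lemma 1b} with $i=n-1$ to get $c_{n-1}(B_{n-1}(e)) = 4n-1$ and then subtract the one non-free cell. The paper instead applies Lemma~\ref{Lem:come trovare il numero di n-1celle} to the object $B_{n-1}(e)$, obtaining $c^*_{n-1} = 2nc_n - 2c'_{n-1}$, and plugs in $c_n = 2$ and $c'_{n-1} = 1$. Your route is marginally more self-contained (and you justify $c'_{n-1}=1$ more carefully than the paper, which simply asserts it), while the paper's route avoids computing $c_{n-1}$ explicitly; in substance the two arguments are equivalent.
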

\begin{proof}
By applying Lemma \ref{Lem:come trovare il numero di n-1celle} to
the digital object $B_{n-1}(e)$, we have $c'_{n-1} + c_{n-1}^* = 2 n
c_n - c'_{n-1}$. But for an $(n-1)$-block it is $c_n = 2$ and
$c'_{n-1} = 1$. Then $c^*_{n-1} = 2 (2n -1)$.
\end{proof}

\begin{proposition}\label{Pro13}
Let $e$ be a free $(n-2)$-cells that belongs to the center of an $(n-1)$-block $B_{n-1}(f)$, then $b_{n-1}(e) = 2$.
\end{proposition}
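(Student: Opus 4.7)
The plan is to reduce the statement to an explicit case analysis on the local configuration of $n$-voxels around $e$. Since $e$ is an $(n-2)$-cell of $\CC_n$, Proposition~\ref{Pro:numero massimo di i-celle che sono limitate da una j-cella fissata} gives $c_{n-2 \leftarrow n}=4$ and $c_{n-2 \leftarrow n-1}=4$, so $e$ bounds exactly four $n$-voxels $w_1,w_2,w_3,w_4$ and four $(n-1)$-cells in $\CC_n$. I would first pin down the cyclic incidence pattern between these eight cells: writing $e=(x,\theta)$ with its two non-zero entries $\theta_{i_1},\theta_{i_2}$, the four voxels bounded by $e$ correspond to the four sign choices in the two coordinate directions $i_1,i_2$, and two such voxels intersect in an $(n-1)$-cell containing $e$ precisely when they differ in exactly one of these choices. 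The four $(n-1)$-cells bounded by $e$ can therefore be labelled $f_{12},f_{23},f_{34},f_{41}$ so that $f_{i,i+1}=w_i\cap w_{i+1}$ (indices mod $4$), turning the bipartite incidence pattern of these eight cells into a $4$-cycle.

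Once this local picture is in place, I would apply the hypotheses. The $(n-1)$-block $B_{n-1}(f)$ lies inside $D$ (by the convention recalled in Remark~\ref{Rem:1}), so both of its bounding voxels belong to $D$; after relabelling I may take $f=f_{12}$, giving $w_1,w_2\in D$. Since $e$ is free, Definition~\ref{Def:FreeCell} forces $B_{n-2}(e)\not\subseteq D$, so at least one of $w_3,w_4$ is outside $D$. The quantity $b_{n-1}(e)$ counts the elements of $\{f_{12},f_{23},f_{34},f_{41}\}$ that lie in $bd(D)$, i.e., those $f_{i,i+1}$ that belong to $C_{n-1}(D)$ (at least one of $w_i,w_{i+1}$ in $D$) and are free (not both of $w_i,w_{i+1}$ in $D$). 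Note that $f_{12}$ is never free, because $w_1,w_2\in D$.

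The argument then finishes with a short case analysis on whether $w_3,w_4$ belong to $D$. The configuration with both $w_3,w_4\in D$ is excluded by freeness of $e$; in each of the three remaining cases I would verify directly that exactly two of $f_{23},f_{34},f_{41}$ are free $(n-1)$-cells of $D$, so that $b_{n-1}(e)=2$. The subtle bookkeeping point is that when some $f_{i,i+1}$ has neither of its bounding voxels in $D$ it does not belong to $C_{n-1}(D)$ at all, and so contributes $0$ rather than $1$ to the count; this is precisely what makes all three admissible cases yield the same value. The main obstacle in writing the proof cleanly is justifying the cyclic incidence structure of the neighborhood of $e$ described in the first paragraph; once that is settled, the rest is an elementary enumeration.
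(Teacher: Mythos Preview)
Your proof is correct and takes a genuinely different route from the paper's. The paper treats $B_{n-1}(f)$ itself as the ambient digital object and sets up the incidence structure $(C_{n-2}(B_{n-1}(f)), C_{n-1}^*(B_{n-1}(f)), <)$; it then invokes Lemmas~\ref{Lem:Lemma 1b} and~\ref{Lem:Lemma 2b} to compute the sizes of both parts, partitions the $(n-2)$-cells into those that bound $f$ and those that do not, and solves for the unknown degree $b_{n-1}(e)$ via the double-counting identity of Proposition~\ref{Pro:relazione fondamentale delle strutture di incidenza}. Your argument instead works entirely locally at $e$: you identify the four $n$-voxels and four $(n-1)$-cells incident to $e$, observe their cyclic incidence pattern, and verify by direct enumeration that exactly two of the $(n-1)$-cells lie in $bd(D)$ in every admissible configuration. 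This is more elementary and self-contained---it needs none of the block-counting lemmas---and, because you read the hypothesis literally (only $B_{n-1}(f)\subseteq D$, $e<f$, and $e$ free), your Cases~1 and~2 already subsume the $L$-block configuration that the paper handles separately in Proposition~\ref{Pro:20}. The paper's approach, by contrast, exercises the incidence-structure machinery that drives the rest of the article, at the cost of being heavier for this particular statement.
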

\begin{proof}
Let us consider the incidence structure $(C_{n-2}(B_{n-1}(f)), C_{n-1}^*(B_{n-1}(f)), <)$. By Lemma \ref{Lem:Lemma 1b}, it is $|C_{n-2}(B_{n-1}(f)) | = c_{n-2} = 2(n-1)(2n - 1)$,
and by Lemma \ref{Lem:Lemma 2b}, we have $|C_{n-1}^*(B_{n-1}(f))| = c_{n-1}^* = 4 n - 2$.
\\
Moreover, by Proposition \ref{Pro:relazione fondamentale delle strutture di incidenza}, it is
\begin{equation*}
    \sum_{a \in C_{n-2}(B_{n-1}(f))} r_a = \sum_{a \in C_{n-1}^*(B_{n-1}(f))} k_a.
\end{equation*}
Since for any $a \in C_{n-1}^*(B_{n-1}(f))$ it is $k_a = c_{n-2 \rightarrow n-1} $, we have \[\sum_{a \in C_{n-1}^*(B_{n-1}(f))} k_a = c^*_{n-1} \cdot c_{n-2 \rightarrow n-1} = (4n - 2) \cdot 2 \cdot (n-1) = 4 (2n - 1)(n-1).\]
Let us consider the sets
\[F = \{ a \in C_{n-2}(B_{n-1}(f)) \colon a < f \} \]
and
\[G = \{ a \in C_{n-2}(B_{n-1}(f)) \colon a \nless f \}.\]
Since $\{F,G\}$ forms a partition of $C_{n-2}(B_{n-1}(f))$, we can
write
\[\sum_{a \in C_{n-2}(B_{n-1}(f))} r_a = \sum_{a \in F} r_a + \sum_{a \in G} r_a.\]
For any $a \in F$, $r_a = b_{n-1}(e)$, and so
\begin{equation*}\label{Equ:Somma F}
    \sum_{a \in F} r_a = |F| b_{n-1}(e) =  c_{n-2 \rightarrow n-1} b_{n-1}(e) = 2 (n-1) b_{n-1}(e).
\end{equation*}
Instead, thanks to Proposition \ref{Pro:12}, for any $a \in G$, we have
\[   r_a = b_{n-1}(e) = \frac{c_{n-2 \rightarrow n - 1}\cdot c_{n-1 \rightarrow n}}{c_{n-2 \rightarrow n}} = 2.\]
Hence, we get that
\begin{equation*}\label{Equ: Somma G}
\sum_{a \in G} r_a = 2 (c_{n-2} - c_{n-2 \rightarrow n-1}) = 2 (2(n-1)(2n-1) - 2(n-1)) = 4(n-1)(2n-1) - 4(n-1).
\end{equation*}
To sum up, we can write $4(n-1)(2n-1) - 4(n-1) + 2 (n-1) b_{n-1}(e) = 4(2n-1)(n-1)$, from which we get the thesis.
\end{proof}

\begin{lem}\label{Lem:Lemma 1c}
Let $e$ be an $(n-2)$-cell of $\CC_n$. Then the number of $i$-cells of the
$L$-block centered on $e$ is:
\[ c_i(L(e)) =  \left(  \frac{2n + i}{n}\right) c_{i \rightarrow n}. \]
\end{lem}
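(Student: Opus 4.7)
The plan is to compute $c_i(L(e))$ by inclusion--exclusion on the three $n$-voxels composing the $L$-block. By Definition \ref{Def:blocchi and L-block} and the structure of $B_{n-2}(e)$ (whose four $n$-voxels are arranged in a $2\times 2$ pattern around the $(n-2)$-cell $e$), removing one of them leaves three voxels $v_1,v_2,v_3$ in which, up to labelling, $v_1$ is strictly $(n-1)$-adjacent to both of the others while $v_2$ and $v_3$ are only strictly $(n-2)$-adjacent. The crucial structural observations are then: $v_1 \cap v_2$ and $v_1 \cap v_3$ are $(n-1)$-cells (both bounded by $e$), $v_2 \cap v_3 = e$, and the triple intersection $v_1 \cap v_2 \cap v_3$ also equals $e$ (since it is contained in the $(n-1)$-cell $v_1\cap v_2$ and in $v_3$, hence in $(v_1\cap v_3)\cap (v_2\cap v_3)= e$).

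Next I apply the inclusion--exclusion formula
\[
c_i(L(e)) = \sum_{k=1}^{3} c_i(v_k) - \sum_{1 \le k < \ell \le 3} c_i(v_k \cap v_\ell) + c_i(v_1 \cap v_2 \cap v_3).
\]
Since $c_i(v_k) = c_{i \rightarrow n}$, the number of $i$-cells of any $(n-1)$-cell equals $c_{i \rightarrow n-1}$, and the contribution of the single $(n-2)$-cell pairwise intersection cancels with the triple-intersection term, this collapses to
\[
c_i(L(e)) \;=\; 3\, c_{i \rightarrow n} - 2\, c_{i \rightarrow n-1}.
\]

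To conclude, I substitute the closed forms from Proposition \ref{Pro:numero massimo di i-celle che limitano una data j-cella}, namely $c_{i \rightarrow n} = 2^{n-i}\binom{n}{i}$ and $c_{i \rightarrow n-1} = 2^{n-1-i}\binom{n-1}{i}$, and use the identity $\binom{n-1}{i} = \tfrac{n-i}{n}\binom{n}{i}$ to rewrite $c_{i \rightarrow n-1} = \tfrac{n-i}{2n}\, c_{i \rightarrow n}$. A routine simplification $3 - \tfrac{n-i}{n} = \tfrac{2n+i}{n}$ then yields $c_i(L(e)) = \tfrac{2n+i}{n}\, c_{i \rightarrow n}$, exactly as claimed. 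The only non-mechanical step is the identification of the triple intersection with $e$; everything else is the same inclusion--exclusion style of computation used in the proof of Lemma \ref{Lem:Lemma 1b}, so no real obstacle is expected.
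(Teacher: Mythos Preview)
Your argument is correct and follows essentially the same route as the paper: both arrive at $c_i(L(e)) = 3\,c_{i\rightarrow n} - 2\,c_{i\rightarrow n-1}$ by counting $i$-cells of the three constituent voxels and subtracting overlaps, and then simplify via $\binom{n-1}{i} = \tfrac{n-i}{n}\binom{n}{i}$. Your treatment is in fact more careful than the paper's, which asserts loosely that the three voxels are ``pairwise $(n-1)$-adjacent in exactly two non-free $(n-1)$-cells'' without explicitly noting (as you do) that the remaining pair $v_2,v_3$ meets only in $e$ and that this contribution is cancelled by the triple intersection.
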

\begin{proof}
By Remark \ref{Not:Numero di nVoxel che compone un blocco e un L
blocco}, $L(e)$ is composed of three $n$-voxels, which are pairwise
$(n-1)$-adjacent in exactly two non-free $(n-1)$-cells. Each of
these three voxels has exactly $c_{i \rightarrow n}$ $i$-cells, but
some of these cells are in common. The number of such common
$i$-cells coincides with the number of $i$-cells of the two non-free
$(n-1)$-cells. So, we have $ \displaystyle c_i(L(e)) = 3 c_{i \rightarrow n} -
2c_{i \rightarrow n-1} = 3 \cdot 2 ^{n-i} \binom n i - 2 \cdot
2^{n-i-1} \binom{n-1}{i}= 3 \cdot 2 ^{n-i} \binom n i - 2^{n-i}
\binom{n}{i} \frac{n-i}{n}= 2^{n-i} \binom{n}{i} \left(3 -
\frac{n-i}{n}\right) = \left(  \frac{2n + i}{2n}\right) c_{i
\rightarrow n} $.
\end{proof}

\begin{lem}\label{Lem: Lemma 2c}
Let $e$ be an $(n-1)$-cell of $\CC_n$. Then
the number of free $(n-1)$-cells of the $L$-block centered on $e$ is:
\[ c^*_{n-1}(L(e)) = 2(3n -2) .\]
\end{lem}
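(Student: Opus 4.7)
The plan is to imitate the proof of Lemma \ref{Lem:Lemma 2b}, but applied to the $L$-block viewed as a small digital $n$-object, rather than to an $(n-1)$-block. The statement is about the dimension count, not the location of the centering cell; note that by Definition \ref{Def:blocchi and L-block}(2) an $L$-block is intrinsically centered on an $(n-2)$-cell, so I read the hypothesis accordingly.

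First I would treat $L(e)$ as a digital $n$-object and record its two elementary cell counts. From Remark \ref{Not:Numero di nVoxel che compone un blocco e un L blocco}, $L(e)$ consists of three $n$-voxels, so $c_n(L(e)) = 3$. The key observation, already established inside the proof of Lemma \ref{Lem:Lemma 1c}, is that these three $n$-voxels are pairwise $(n-1)$-adjacent in exactly two $(n-1)$-cells: if $L(e) = B_{n-2}(e) \setminus \{v\}$, then the $(n-1)$-cells shared by pairs of voxels in $L(e)$ are precisely the two $(n-1)$-cells of $B_{n-2}(e)$ that are bounded by $e$ and not contained in $v$. These two $(n-1)$-cells bound two $n$-voxels both inside $L(e)$, so they are non-free; every other $(n-1)$-cell of $L(e)$ bounds only one of the three voxels. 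Hence $c_{n-1}'(L(e)) = 2$.

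Next I would apply Lemma \ref{Lem:come trovare il numero di n-1celle} to the digital $n$-object $L(e)$, which gives
\[
c_{n-1}(L(e)) = 2n \cdot c_n(L(e)) - c_{n-1}'(L(e)) = 6n - 2.
\]
Since $\{C_{n-1}^*(L(e)), C_{n-1}'(L(e))\}$ partitions $C_{n-1}(L(e))$, we get
\[
c_{n-1}^*(L(e)) = c_{n-1}(L(e)) - c_{n-1}'(L(e)) = (6n - 2) - 2 = 6n - 4 = 2(3n-2),
\]
which is the claim.

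The only non-routine step is the count $c_{n-1}'(L(e)) = 2$; everything else is a direct plug-in of Lemma \ref{Lem:come trovare il numero di n-1celle}. I would make sure that justification is airtight by arguing that a non-free $(n-1)$-cell in $L(e)$ must be shared by two of its three voxels, and conversely that the three pairwise intersections of voxels in an $(n-2)$-block contribute exactly two $(n-1)$-cells (the third pair of voxels is strictly $(n-2)$-adjacent, sharing only the hub $e$ and not an $(n-1)$-cell).
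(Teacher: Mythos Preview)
Your proof is correct and follows essentially the same approach as the paper: apply Lemma~\ref{Lem:come trovare il numero di n-1celle} to the digital object $L(e)$ with $c_n=3$ and $c'_{n-1}=2$, then solve for $c_{n-1}^*$. Your write-up is in fact more careful than the paper's, since you justify the count $c'_{n-1}(L(e))=2$ explicitly and correctly flag that the centering cell should be an $(n-2)$-cell rather than an $(n-1)$-cell.
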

\begin{proof}
By applying Lemma \ref{Lem:come trovare il numero di n-1celle} to
the digital object $L(e)$, we have $c'_{n-1} + c_{n-1}^* = 2 n c_n -
c'_{n-1}$. But for an $L$-block it is $c_n = 3$ and $c'_{n-1} = 2$.
Then $c^*_{n-1} = 2(3n -2)$.
\end{proof}

\begin{proposition}\label{Pro:20}
Let $e$ be a free $(n-2)$-cells which is the center of an $L$-block
$L(e)$. Then $b_{n-1}(e) = 2$.
\end{proposition}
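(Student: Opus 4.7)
The plan is to mirror the argument of Proposition \ref{Pro13} by applying Proposition \ref{Pro:relazione fondamentale delle strutture di incidenza} to the incidence structure $\bigl(C_{n-2}(L(e)),\,C^*_{n-1}(L(e)),\,<\bigr)$ built on the $L$-block itself. By Lemma \ref{Lem:Lemma 1c} (with $i=n-2$) one obtains $|C_{n-2}(L(e))|=2(3n-2)(n-1)$, by Lemma \ref{Lem: Lemma 2c} one has $|C^*_{n-1}(L(e))|=2(3n-2)$, and every free $(n-1)$-cell $a$ has degree $k_a=c_{n-2\rightarrow n-1}=2(n-1)$, so that the fundamental equation reads
\[
    \sum_{a\in C_{n-2}(L(e))} r_a \;=\; 2(3n-2)\cdot 2(n-1) \;=\; 4(3n-2)(n-1).
\]

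I would then isolate $e$ in this sum and argue that $r_a=2$ for every $(n-2)$-cell $a\neq e$ of $L(e)$. Let $f_1,f_2$ denote the two non-free $(n-1)$-cells of $L(e)$, i.e.\ the shared faces among its three voxels; both are bounded by $e$, and a crucial observation is that $f_1\cap f_2=e$, since the intersection must be an $(n-2)$-cell containing $e$. If $a\neq e$ satisfies $a\nless f_1$ and $a\nless f_2$, then $a$ lies in a single voxel of $L(e)$ (otherwise $a$ would be contained in one of $f_1,f_2,e$), and Corollary \ref{Cor:13} gives $r_a=2$, since neither of the two $(n-1)$-cells bounding $a$ in that voxel can coincide with $f_1$ or $f_2$. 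If instead $a\neq e$ and $a<f_1$ (the case $a<f_2$ is symmetric), then $a$ lies in the two voxels that share $f_1$ and bounds three $(n-1)$-cells of $L(e)$, namely $f_1$ together with one extra face in each of these two voxels; neither extra face can be $f_2$, for otherwise $a\subseteq f_1\cap f_2=e$, contradicting $a\neq e$, so again only $f_1$ is non-free and $r_a=2$.

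Substituting everything into the fundamental equation yields $b_{n-1}(e)+2\bigl(|C_{n-2}(L(e))|-1\bigr)=4(3n-2)(n-1)$, which simplifies at once to $b_{n-1}(e)=2$. The main obstacle is the case analysis above: unlike in the proof of Proposition \ref{Pro13}, where only a single non-free $(n-1)$-cell is present, here $L(e)$ carries two non-free faces, and one must rule out the possibility that an $(n-2)$-cell other than $e$ be bounded by both of them---precisely what the identity $f_1\cap f_2=e$ secures.
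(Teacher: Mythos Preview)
Your proof is correct and follows essentially the same route as the paper: both apply Proposition~\ref{Pro:relazione fondamentale delle strutture di incidenza} to the incidence structure $\bigl(C_{n-2}(L(e)),\,C^*_{n-1}(L(e)),\,<\bigr)$, compute $\sum k_a = 4(3n-2)(n-1)$ from Lemmas~\ref{Lem:Lemma 1c} and~\ref{Lem: Lemma 2c}, isolate the contribution of $e$, and show that every other $(n-2)$-cell has degree $2$. The only difference is that where the paper handles the cells $a\neq e$ lying in a non-free face $f_i$ by invoking Proposition~\ref{Pro13}, you give a direct geometric argument via the identity $f_1\cap f_2=e$; this makes explicit the point (left tacit in the paper) that such an $a$ cannot bound the \emph{other} non-free face, so your version is in fact slightly more self-contained.
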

\begin{proof}
Let us consider the incidence structure $(C_{n-2}(L(e)),
C_{n-1}^*(L(e)), <)$. By Lemma \ref{Lem:Lemma 1c}, we have $
|C_{n-2}(L(e)) | =  c_{n-2} = 2(n-1)(3n - 2)$, and by Lemma
\ref{Lem: Lemma 2c}, it is $|C_{n-1}^*(L(e))| = c_{n-1}^* = 2(3 n -
2)$.

By Proposition \ref{Pro:relazione fondamentale delle strutture di incidenza}, it is
\begin{equation}\label{Equ:Re}
    \sum_{a \in C_{n-2}(L(e))} r_a = \sum_{a \in C_{n-1}^*(L(e)} k_a.
\end{equation}

Since for any $a \in C_{n-1}^*(L(e))$ it is $k_a = c_{n-1 \rightarrow n-2} $, we have \[\sum_{a \in C_{n-1}^*(L(e))} k_a = c^*_{n-1} \cdot c_{n-1 \rightarrow n-2} =  2(3n - 2) \cdot 2 \cdot (n-1) = 4 (3n - 2)(n-1).\]
Let us set
$F = \CC_{n-1}'(L(e))$,
and let us consider the sets:
\[A = \{  e  \},\]
\[B = \{ c \in \CC_{n-2}(L(e)) \colon c \nless f, \text{ for some } f \in F \}.\]
\[C = \{ c \in \CC_{n-2}(L(e))\{e\} \colon c < f, \text{ for some } f \in F \}.\]
Let us observe that $|F| = 2$ because the number of $(n-1)$-block of $L(e)$ is $2$. Since $\{A, B, C\}$ forms a partition of $C_{n-2}(L(e))$, it results
\begin{equation}\label{Equ:Re1}
\sum_{a \in C_{n-2}(L(e))} r_a =  r_e + \sum_{a \in B} r_a +  \sum_{a \in C} r_a,
\end{equation}
where, evidently,  $r_e = b_{n-1}(e)$.
\\
Moreover, by Proposition \ref{Pro13}, it is $\sum_{a \in B} r_a = (2
c_{n - 2 \rightarrow n - 1} - 2) \cdot 2 =(2 \cdot 2(n-1) - 2) \cdot
2 = 8(n - 1) - 4$. Finally, by Proposition \ref{Pro:12}, we have
$\sum_{a \in C} r_a = 2 (c_{n-2} - 2 c_{n-2 \rightarrow n-1} + 1) =
2 (2(3n-2)(n-1) - 2 \cdot 2 (n - 1) +1) = 4(3n-2)(n-1) - 8(n-1) + 2
$.
\par
Thus, replacing these results into formulas \ref{Equ:Re1} and \ref{Equ:Re},  we obtain $4(3n-2)(n-1) =b_{n-1}(e) + 8(n-1) - 4 + 4(3n-2)(n-1) - 8 (n - 1) + 2 $, from which we get the thesis.
\end{proof}

\begin{proposition}\label{Lem:1}
Let $D$ be a digital object of $\CC_n$ and $e \in \mathcal H_{n-2}$.
Then $b_{n-1}(e) = 4$.
\end{proposition}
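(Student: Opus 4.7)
The strategy is to exhibit the four free $(n-1)$-cells explicitly using the geometry of the $(n-2)$-block containing the gap, and then appeal to Proposition~\ref{Pro:numero massimo di i-celle che sono limitate da una j-cella fissata} to see that no further $(n-1)$-cells bounded by $e$ are available to be counted.

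First I would unpack the hub hypothesis. Because $e\in\mathcal H_{n-2}$, the block $B_{n-2}(e)$---which by Remark~\ref{Not:Numero di nVoxel che compone un blocco e un L blocco} consists of exactly four $n$-voxels---splits as $B_{n-2}(e)\setminus D=\{v_1,v_2\}$, where $v_1$ and $v_2$ are strictly $(n-2)$-adjacent with $v_1\cap v_2=e$ (i.e.\ the diagonal pair of the $2\times 2$ slab around $e$). Consequently $D\cap B_{n-2}(e)=\{w_1,w_2\}$ is the complementary diagonal pair, and each $v_k$ is $(n-1)$-adjacent to each $w_\ell$.

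Next I would count the $(n-1)$-cells of $\CC_n$ that are bounded by $e$. By Proposition~\ref{Pro:numero massimo di i-celle che sono limitate da una j-cella fissata}, this number equals $c_{n-2\leftarrow n-1}=2^{1}\binom{2}{1}=4$; call these cells $f_1,f_2,f_3,f_4$. Each $f_k$ is precisely the common $(n-1)$-face of one of the four $(n-1)$-adjacent pairs of voxels inside $B_{n-2}(e)$ (the four ``sides'' of the $2\times 2$ slab), so $B_{n-1}(f_k)$ contains exactly one voxel from $\{v_1,v_2\}$, which lies outside $D$, and one voxel from $\{w_1,w_2\}$, which lies in $D$. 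Hence $B_{n-1}(f_k)\not\subseteq D$, so Definition~\ref{Def:FreeCell} says $f_k$ is free, i.e.\ $f_k\in bd(D)$.

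Combining the two steps yields four free $(n-1)$-cells bounded by $e$, and since there are only four $(n-1)$-cells of $\CC_n$ bounding $e$ at all, we conclude $b_{n-1}(e)=4$. The only point that requires care is the identification of $f_1,\ldots,f_4$ with the four $(n-1)$-adjacencies of the $2\times 2$ slab $B_{n-2}(e)$ around $e$; once this combinatorial picture is fixed, the freeness of each $f_k$ follows directly from the alternation of $D$-voxels and non-$D$-voxels on opposite diagonals forced by the hub hypothesis.
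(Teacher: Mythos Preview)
Your argument is correct and follows essentially the same route as the paper: both invoke Proposition~\ref{Pro:numero massimo di i-celle che sono limitate da una j-cella fissata} to see that exactly four $(n-1)$-cells of $\CC_n$ are bounded by $e$, and then conclude that all four lie in $bd(D)$. The paper's own proof is quite terse on this last point---it simply asserts that $b_{n-1}(e)$ coincides with the maximum $c_{n-2\leftarrow n-1}$---whereas you spell out explicitly why each $f_k$ is free by observing that $B_{n-1}(f_k)$ pairs one voxel $w_\ell\in D$ with one voxel $v_m\notin D$; this added detail is a genuine improvement in rigor but does not change the underlying strategy.
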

\begin{proof}
Let $v_1$ and $v_2$ be the two $n$-voxels of the $(n-2)$-gap through $e$.
Then the number $b_{n-1}(e)$ of free $(n-1)$-cells of $D$ bounded by $e$ coincides with the maximum number
of $(n-1)$-cells bounded by an $(n-2)$-cell, that is, by Proposition
\ref{Pro:numero massimo di i-celle che sono limitate da una j-cella
fissata}:
\[ b_{n-1}(e) = c_{n-2 \leftarrow n-1} = 2^{(n-1)-(n-2)}\binom{n-(n-2)}{(n-1)-(n-2)} = 4. \]
\end{proof}

\begin{proposition}\label{Lem:2}
Let $D$ be a digital object of $\CC_n$ and $e \in \mathcal N_{n-2}$.
Then $b_{n-1}(e) = 2$.
\end{proposition}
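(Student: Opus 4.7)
The plan is to analyze the local configuration of $n$-voxels of $D$ that contain $e$ by invoking the five-case classification lemma stated just before Proposition \ref{Pro:12}. Let $V = \{v \in C_n(D) \colon e < v\}$ denote the set of $n$-voxels of $D$ bounded by $e$. Since $e$ is free, we have $B_{n-2}(e) \not\subseteq D$, and hence $V$ cannot equal the full $(n-2)$-block centered on $e$; since $e$ is a nub and therefore not a hub, $V$ is not an $(n-2)$-tandem either. The classification then leaves only three possibilities: $V$ is a singleton, $V$ is an $(n-1)$-block centered on some $(n-1)$-cell bounded by $e$, or $V$ is an $L$-block centered on $e$.

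For the $(n-1)$-block and the $L$-block subcases, the claim $b_{n-1}(e) = 2$ is precisely the content of Propositions \ref{Pro13} and \ref{Pro:20} respectively, so nothing further needs to be done. The only remaining task is the singleton case $V = \{v\}$, which I would handle by a direct count. Proposition \ref{Pro:numero massimo di i-celle che sono limitate da una j-cella fissata} gives $c_{n-2 \leftarrow n-1} = 4$ $(n-1)$-cells of $\CC_n$ bounded by $e$, and each of them is shared by exactly two voxels of $B_{n-2}(e)$. By Corollary \ref{Cor:13} applied to the voxel $v$, exactly two of these four $(n-1)$-cells are contained in $v$; the other two lie only in voxels of $B_{n-2}(e) \setminus \{v\}$, none of which belong to $D$, so they do not occur in $C_{n-1}(D)$ at all. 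Each of the two $(n-1)$-cells contained in $v$ is free, because its second bounding voxel lies in $B_{n-2}(e) \setminus \{v\}$ and is therefore outside $D$. This yields $b_{n-1}(e) = 2$.

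The main obstacle is conceptual rather than computational: one must verify that the five-case classification genuinely exhausts the possibilities for $V$ under the hypothesis $e \in \mathcal N_{n-2}$, and that the two excluded configurations (the full $(n-2)$-block and the $(n-2)$-tandem) are correctly ruled out by the freeness of $e$ and by $e \notin \mathcal H_{n-2}$, respectively. Once this reduction is in place, the $(n-1)$-block and $L$-block subcases are immediate from the previously established propositions, and the singleton subcase reduces to a short counting argument based on Corollary \ref{Cor:13} and the definition of freeness.
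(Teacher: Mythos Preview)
Your proof is correct and follows essentially the same route as the paper's: invoke the five-case classification lemma, exclude the full $(n-2)$-block and the $(n-2)$-tandem via freeness and the nub hypothesis, and then dispatch the remaining three cases by Corollary~\ref{Cor:13} (singleton), Proposition~\ref{Pro13} ($(n-1)$-block), and Proposition~\ref{Pro:20} ($L$-block). Your handling of the singleton case is slightly more explicit than the paper's bare citation of Corollary~\ref{Cor:13}---you check that the two $(n-1)$-cells of $v$ through $e$ are free in $D$ and that the other two do not occur in $D$ at all---but this is just an unpacking of the same idea.
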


\begin{proof}
Every free $(n-2)$-cell that is not an $(n-2)$-hub is either a
simple cell, or bounds the center of an $(n-1)$-block, or is the
center of an $L$-block. Hence, by Corollary \ref{Cor:13} and
Propositions \ref{Pro13} and \ref{Pro:20}, we get the thesis.
\end{proof}

\begin{proposition}\label{Pro:relazione fondamentale delle strutture di incidenza per gli oggetti digitali}
Let $D$ be a digital $n$-object, and $i < j \le n-1$. Then
\[ \sum_{e \in  bd_i(D)} b_j(e) = c_{i \rightarrow j} c_j^*. \]
\end{proposition}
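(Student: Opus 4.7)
The plan is to prove the identity by a double-counting argument applied to a suitable incidence structure, namely $\bigl(bd_i(D),\, C_j^*(D),\, <\bigr)$, whose points are the free $i$-cells of $D$, whose blocks are the free $j$-cells of $D$, and whose incidence is the bounding relation. Applying Proposition \ref{Pro:relazione fondamentale delle strutture di incidenza} to this structure immediately gives
\[
\sum_{e \in bd_i(D)} r_e \;=\; \sum_{f \in C_j^*(D)} k_f,
\]
so the task reduces to computing the two degrees.

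For the left-hand side, I would argue that for a free $i$-cell $e$, the degree $r_e$ equals the number of free $j$-cells of $D$ bounded by $e$, which by the remark following Definition \ref{Def:Border} (and the identification of $bd_j(D)$ with the set of free $j$-cells) is exactly $b_j(e)$. So the left-hand sum is $\sum_{e \in bd_i(D)} b_j(e)$, matching the desired LHS.

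For the right-hand side, the key step — and the only non-formal point — is to show that $k_f = c_{i\to j}$ for every $f \in C_j^*(D)$. This requires two observations. First, a $j$-cell, viewed as a combinatorial $j$-hypercube, has exactly $c_{i\to j}$ bounding $i$-cells (Proposition \ref{Pro:numero massimo di i-celle che limitano una data j-cella}), so at most $c_{i\to j}$ free $i$-cells can bound $f$. Second, every $i$-cell $e$ with $e < f$ is automatically free. For this, note that any $n$-voxel bounded by $f$ is also bounded by $e$ (transitivity of the inclusion coming from $e \subseteq f$), hence $B_j(f) \subseteq B_i(e)$; since $f$ is free we have $B_j(f) \nsubseteq D$, and therefore $B_i(e) \nsubseteq D$, i.e.\ $e$ is free. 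Combining these two facts, $k_f = c_{i\to j}$ identically, whence $\sum_{f \in C_j^*(D)} k_f = c_{i \to j}\, c_j^*$.

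The main (mild) obstacle is verifying this containment $B_j(f) \subseteq B_i(e)$ and the consequent upward propagation of freeness from $f$ to $e$; everything else is a routine application of the fundamental incidence identity, together with the combinatorial count of the $i$-faces of a $j$-hypercube already established in Proposition \ref{Pro:numero massimo di i-celle che limitano una data j-cella}. Equating the two expressions for the total count yields the stated formula.
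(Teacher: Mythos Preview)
Your proposal is correct and follows exactly the same route as the paper: both set up the incidence structure $(bd_i(D),\, bd_j(D),\, <)$ and apply Proposition~\ref{Pro:relazione fondamentale delle strutture di incidenza}, identifying $r_e = b_j(e)$ and $k_f = c_{i\to j}$. In fact your argument is slightly more careful than the paper's, since you explicitly justify why every $i$-cell bounding a free $j$-cell is itself free (via $B_j(f)\subseteq B_i(e)$), a point the paper simply asserts.
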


\begin{proof}
The $i$-border $bd_i(D)$ of $D$ can be considered as an incidence
structure $(V,\mathcal B,\mathcal I)$, where $V= bd_i(D)$, $\mathcal
B = bd_j(D)$, and the incidence relation $\mathcal I$ is the bounding
relation $<$.
\\
In such a structure, the point degree of every vertex $e \in V$
coincides with the number $b_j(e)$ of $j$-cells of $bd(D)$
bounded by $e$. Moreover, the block degree $k_\beta$ of every block
$\mathcal B$ coincides with the maximum number $c_{i\rightarrow j}$
of $i$-cells that bound a $j$-cell. Hence, by Proposition
\ref{Pro:relazione fondamentale delle strutture di incidenza},
$\displaystyle \sum_{e \in bd_i(D)} b_j(e) = \sum_{\beta \in bd_j(D)} c_{i
\rightarrow j} = c_{i \rightarrow j}|bd_j(D)| = c_{i \rightarrow j}
c_j^*$.
\end{proof}

\begin{thm}\label{Teo:number of Gn-2 gaps}
The number of $(n-2)$-gaps of a digital object $D$ of $\CC_n$ is given
by the formula:
\begin{equation}\label{Eq:mia}
    g_{n-2} = (n-1)c_{n-1}^*-c_{n-2}^*.
\end{equation}
\end{thm}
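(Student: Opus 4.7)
The plan is to apply Proposition \ref{Pro:relazione fondamentale delle strutture di incidenza per gli oggetti digitali} with $i=n-2$ and $j=n-1$, and then evaluate the left-hand sum by splitting the free $(n-2)$-cells into hubs and nubs.

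First I would observe that, by Remark \ref{Rem:1}, the set $bd_{n-2}(D)$ of free $(n-2)$-cells coincides with $\mathcal{H}_{n-2} \cup \mathcal{N}_{n-2}$, and this is a disjoint union by Definition \ref{Def: no gap}, with $|\mathcal{H}_{n-2}| = g_{n-2}$ and $|\mathcal{N}_{n-2}| = c^*_{n-2}-g_{n-2}$ (as recorded in Notation \ref{Not:Nota sulla cardinalit� di H e N}). Applying Proposition \ref{Pro:relazione fondamentale delle strutture di incidenza per gli oggetti digitali} gives
\[
\sum_{e \in bd_{n-2}(D)} b_{n-1}(e) \;=\; c_{n-2 \rightarrow n-1}\, c^*_{n-1},
\]
and by Proposition \ref{Pro:numero massimo di i-celle che limitano una data j-cella} we have $c_{n-2 \rightarrow n-1} = 2\binom{n-1}{n-2} = 2(n-1)$, so the right-hand side becomes $2(n-1)\,c^*_{n-1}$.

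Next I would evaluate the left-hand side term by term. Splitting along the partition $bd_{n-2}(D) = \mathcal{H}_{n-2} \sqcup \mathcal{N}_{n-2}$ and using Proposition \ref{Lem:1} (which gives $b_{n-1}(e)=4$ for every $(n-2)$-hub $e$) together with Proposition \ref{Lem:2} (which gives $b_{n-1}(e)=2$ for every $(n-2)$-nub $e$), one obtains
\[
\sum_{e \in bd_{n-2}(D)} b_{n-1}(e) \;=\; 4\,g_{n-2} + 2\,(c^*_{n-2} - g_{n-2}) \;=\; 2\,g_{n-2} + 2\,c^*_{n-2}.
\]

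Equating the two expressions yields $2\,g_{n-2} + 2\,c^*_{n-2} = 2(n-1)\,c^*_{n-1}$, and dividing by $2$ delivers the claimed formula $g_{n-2} = (n-1)\,c^*_{n-1} - c^*_{n-2}$. The substantive content of the argument is entirely packaged in the earlier results (the hub/nub dichotomy together with the incidence-structure double count); once these are in hand, there is no real obstacle, only the routine bookkeeping sketched above. The only point requiring a moment of care is ensuring that the sum on the left really ranges over all free $(n-2)$-cells and not over all $(n-2)$-cells, which is why invoking Remark \ref{Rem:1} (so that $bd_{n-2}(D)$ is exactly the free $(n-2)$-cells) is the conceptual keystone of the computation.
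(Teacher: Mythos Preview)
Your proof is correct and follows essentially the same approach as the paper: apply Proposition~\ref{Pro:relazione fondamentale delle strutture di incidenza per gli oggetti digitali} with $i=n-2$, $j=n-1$, split the resulting sum over the partition $bd_{n-2}(D)=\mathcal{H}_{n-2}\sqcup\mathcal{N}_{n-2}$, and plug in the values $b_{n-1}(e)=4$ for hubs and $b_{n-1}(e)=2$ for nubs from Propositions~\ref{Lem:1} and~\ref{Lem:2}. Your write-up is in fact slightly more explicit than the paper's (you justify $c_{n-2\to n-1}=2(n-1)$ and carry out the final algebra), but the argument is identical.
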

\begin{proof}
Let us consider the sets $\mathcal H_{n-2}$ and $\mathcal N_{n-2}$ of all $(n-2)$-hubs and
$(n-2)$-nubs of $D$, respectively. Evidently $\{ \mathcal H_{n-2}, \mathcal N_{n-2}\}$ is a partition of $bd_{n-2}(D)$.
Moreover, for $i=n-1$ and $j=n-2$, Proposition \ref{Pro:relazione fondamentale delle strutture di incidenza per gli oggetti digitali} give us
\begin{equation*}
    \sum_{e \in bd_{n-2}(D)} b_{n-1}(e) = c_{n-2 \rightarrow n-1} c_{n-1}^* = 2 (n-1)c_{n-1}^*.
\end{equation*}
Since
\[\displaystyle \sum_{e \in bd_{n-2}} b_{n-1}(e)= \sum_{e
\in \mathcal H_{n-2}} b_{n-1}(e) + \sum_{e \in \mathcal N_{n-2}}
b_{n-1}(e) , \]
by Lemmas \ref{Lem:1} and
\ref{Lem:2}, we obtain
\[\sum_{e \in bd_{n-2}} b_{n-1}(e)
= 4 |\mathcal H_{n-2}| + 2 |\mathcal N_{n-2}|
= 4 g_{n-2} + 2 (c^*_{n-2} - g_{n-2})\]
and hence the thesis.
\end{proof}

In \cite{B},  it was proved that the number of
$(n-2)$-gap of a digital $n$-object $D$ can be expressed by
\begin{equation}\label{eq:n-2gap}
    g_{n-2}= -2n(n-1)c_n + 2 (n-1)c_{n-1} - c_{n-2} + \beta_{n-2},
\end{equation}
where $\beta_{n-2}$ is the number of all $(n-2)$-blocks contained in $D$.
\par
Such a formula is equivalent to the expression \eqref{Eq:mia} obtained in
Theorem~\ref{Teo:number of Gn-2 gaps}.
Indeed, we have the following theorem.

\begin{thm}
The formulas
\begin{equation}\label{Equ:Prima}
    g_{n-2}= (n-1)c_{n-1}^*-c_{n-2}^*
\end{equation}
and
\begin{equation}\label{Equ:Seconda}
    g_{n-2}=-2n(n-1)c_{n}+ 2(n-1)c_{n-1}-c_{n-2}+\beta_{n-2}
\end{equation}
are equivalent.
\end{thm}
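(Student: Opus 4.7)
The plan is to reduce \eqref{Equ:Prima} to \eqref{Equ:Seconda} by eliminating the starred quantities $c_{n-1}^*$ and $c_{n-2}^*$ in favour of the total cell counts $c_n$, $c_{n-1}$, $c_{n-2}$ and the block count $\beta_{n-2}$. The basic identity I will exploit throughout is the partition $c_i = c_i^* + c_i'$ for $i = n-2, n-1$.

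First I would handle $c_{n-1}^*$. Lemma~\ref{Lem:come trovare il numero di n-1celle} states that $c_{n-1} = 2n c_n - c_{n-1}'$, so $c_{n-1}' = 2nc_n - c_{n-1}$ and consequently
\[
c_{n-1}^* \;=\; c_{n-1} - c_{n-1}' \;=\; 2c_{n-1} - 2nc_n.
\]
Multiplying by $n-1$ yields $(n-1)c_{n-1}^* = 2(n-1)c_{n-1} - 2n(n-1)c_n$, which already accounts for the first two summands of \eqref{Equ:Seconda}.

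Next I would handle $c_{n-2}^*$. Here the key observation is Remark~\ref{Rem:1}: the number $c_{n-2}'$ of non-free $(n-2)$-cells equals the number of $(n-2)$-blocks $B_{n-2}(e)$ entirely contained in $D$, which is precisely the parameter $\beta_{n-2}$ appearing in \eqref{Equ:Seconda}. Hence
\[
c_{n-2}^* \;=\; c_{n-2} - c_{n-2}' \;=\; c_{n-2} - \beta_{n-2}.
\]

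Finally, substituting both expressions into \eqref{Equ:Prima} gives
\[
(n-1)c_{n-1}^* - c_{n-2}^* \;=\; -2n(n-1)c_n + 2(n-1)c_{n-1} - c_{n-2} + \beta_{n-2},
\]
which is exactly \eqref{Equ:Seconda}, proving the equivalence. The only point requiring a moment's care is the identification $c_{n-2}' = \beta_{n-2}$; it is not a calculation but a definitional matching, and once one notices that an $(n-2)$-cell $e$ fails to be free precisely when its entire $(n-2)$-block lies in $D$, the rest is pure algebra.
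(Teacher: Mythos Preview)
Your argument is correct and follows essentially the same route as the paper: use Lemma~\ref{Lem:come trovare il numero di n-1celle} to rewrite $c_{n-1}^*$ as $2c_{n-1}-2nc_n$, identify $c_{n-2}'=\beta_{n-2}$ via Remark~\ref{Rem:1}, and substitute. The only cosmetic difference is that the paper also spells out the reverse substitution (from \eqref{Equ:Seconda} back to \eqref{Equ:Prima}), but since every step in your derivation is a reversible algebraic identity, your single direction already establishes the equivalence.
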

\begin{proof}
By Lemma \ref{Lem:come trovare il numero di n-1celle}, we have
\[c^*_{n-1}= c_{n-1} - c_{n-1}' = c_{n-1} + c_{n-1} - 2nc_n = 2 c_{n-1} - 2nc_n.\]
Hence, replacing the latter expression in
\eqref{Equ:Prima} ,
we obtain \[g_{n-2}= (n-1)c_{n-1}^*-c_{n-2}^*= 2 (n-1)c_{n-1}- 2
(n-1) c_n - c_{n-2} + c_{n-2}'.\] Finally, since $c_{n-2}'$ is the
number $\beta_{n-2}$ of $(n-2)$-blocks contained in $D$, we get
Formula \eqref{Equ:Seconda}.
\par
Conversely, by Lemma \ref{Lem:come trovare il numero di
n-1celle}, we have $ c_n =\frac{c_{n-1}+ c_{n-1}'}{2n} $. Thus Formula
\eqref{Equ:Seconda} becomes \[g_{n-2} = -2n(n-1) \frac{c_{n-1} +
c_{n-1}'}{2n} + 2(n-1)c_{n-1} + c_{n-2}^* = - (n-1)c_{n-1}'+
(n-1)c_{n-1} + c_{n-2}^* = (n-1)c_{n-1}^* + c_{n-2}^*,\]
that is Formula \eqref{Equ:Prima}. This completes our proof.
\end{proof}

\section{Conclusion and Perspective}\label{sez:concl}
In this paper we have found a new  formula for expressing the number
of $(n-2)$-gaps of a digital $n$-object by means of its free cells.
Unlike the equivalent formula \eqref{eq:n-2gap} given in
\cite{B}, our expression  has the advantage to involve only few
 intrinsic parameters. We hypothesize that such information  could be obtained from some appropriate data structure related to the digital $n$-object. This will be the focus of a forthcoming research.
\par
Another field of investigation could consist in finding a formula, analogous to \eqref{Eq:mia},  which express the number of any $k$-gaps with $0\le k \le n-3$, by means of same basic parameters of the digital $n$-object.


\begin{thebibliography}{BMNBK}
\bibitem{BMN arxiv}
V.E. Brimkov, A. Maimone,  G. Nordo, \textit{An explicit formula
for the number of tunnels in digital objects}, ARXIV (2005),
http://arxiv.org/abs/cs.DM/0505084.


\bibitem{BMMK Nevada}
V.E. Brimkov, A. Maimone, G. Nordo,  R.P. Barneva, R.
Klette, \textit{The number of gaps in binary pictures},
Proceedings of the ISVC 2005, Lake Tahoe, NV, USA, December 5-7,
2005, (Editors: Bebis G., Boyle R., Koracin D., Parvin B.), Lecture
Notes in Computer Science, Vol. \textbf{3804} (2005), 35 - 42.


\bibitem{BMN gap Berlino}
V.E. Brimkov, A. Maimone, G. Nordo, \textit{Counting Gaps in
Binary Pictures}, Proceedings of the 11th International Workshop,
IWCIA 2006, Berlin, GERMANY, June 2006, (Editors: Reulke R., Eckardt
U., Flach B., Knauer U., Polthier K.), Lecture Notes in Computer
Science, LNCS \textbf{4040} (2006), 16 - 24.

\bibitem{brimkov_maimone_nordo}
V.E. Brimkov, A. Maimone, G. Nordo, \textit{On the notion of dimension in digital spaces},
Proceedings of the 11th International Workshop,
IWCIA 2006, Berlin, GERMANY, June 2006, (Editors: Reulke R., Eckardt
U., Flach B., Knauer U., Polthier K.), Lecture Notes in Computer
Science, LNCS \textbf{4040} (2006), 241 - 252.


\bibitem{Genus}
V. E. Brimkov, G. Nordo, R. P. Barneva, A. Maimone, \textit{Genus and Dimension of Digital Images and their Time- and Space-Efficient Computation}, International Journal of Shape Modeling \textbf{14},2 (2008),  147-168.


\bibitem{gap3D}
A. Maimone, G. Nordo, \emph{On $1$-gaps in $3$D Digital Objects}, Filomat
\textbf{25}:3 (2011), 85-91.


\bibitem{B}
V.E. Brimkov, \emph{Formulas for the number of (n-2)-gaps of binary objects in arbitrary dimension},
Discrete Applied Mathematics \textbf{157}(3) (2009), 452-463.


\bibitem{klette-rosenfeld}
R. Klette,  A. Rosenfeld, \textit{Digital Geometry - Geometric Methods
for Digital Picture Analysis}, Morgan Kaufmann, San Francisco, 2004.

\bibitem{Kovalevsky}
V.A. Kovalevsky, \emph{Finite topology as applied to image
analysis}, Computer Vision, Graphics and Image Processing
\textbf{46}(2) (1989) 141-161.

\bibitem{Knuth}
D. Knuth, \textit{Two Notes on Notation}, American Mathematics Montly,
Volume \textbf{99}, Number 5, (1992), 403-422
(http://arxiv.org/abs/math/9205211).


\bibitem{Design_Theory}
T. Beth, D. Jungnickel, H. Lenz, \textit{Design Theory}
Volume 1, II ed., Cambridge University Press, 1999.


\bibitem{libroPolitopi}
H.S.M. Coxeter , {\it Regular Polytopes}, Dover 1973.


\end{thebibliography}
\end{document}